\newcommand{\ints}{\mathbb{Z}}
\newcommand{\re}{\mathbb{R}}
\newcommand{\cx}{\mathbb{C}}
\newcommand{\sub}[1]{\langle #1\rangle} 
\newcommand{\cA}{\mathcal{A}}
\newcommand{\cC}{\mathcal{C}}
\newcommand{\Cl}{\mathcal{C}\!\ell}
\newcommand{\sF}{\mathsf{F}}
\newcommand{\sI}{\mathsf{I}}
\newcommand{\bj}{\mathbf{j}}
\newcommand{\cO}{{\mathcal O}}
\newcommand{\cP}{\mathcal{P}}
\newcommand{\vp}{\varphi}
\newcommand{\cS}{{\mathcal S}}
\newcommand{\srg}{\text{srg}}
\newcommand{\ST}{{\mathcal{ST}}}
\DeclareMathOperator\Aut{Aut}
\DeclareMathOperator\spn{span}
\DeclareMathOperator\Stab{Stab}
\DeclareMathOperator\supp{Supp}
\newtheorem{theorem}{Theorem}[section]
\newtheorem{proposition}[theorem]{Proposition}
\newtheorem{lemma}[theorem]{Lemma}
\newtheorem{corollary}[theorem]{Corollary}
\newtheorem{definition}{Definition}[section]
\newtheorem{example}{Example}[section]
\journal{Journal}
\begin{document}

\begin{frontmatter}

\title{Continuous Time Quantum Walks on Graphs: Group State Transfer} 

\author[wpimainaddress,drexeladdress]{Luke C.~Brown}
\author[wpimainaddress]{William J.~Martin\corref{correspondingauthor}} 
\author[wpimainaddress]{Duncan Wright}

\cortext[correspondingauthor]{Corresponding author. \newline 
\indent \ \ E-mail address: \href{mailto:martin@wpi.edu}{martin@wpi.edu} (W.J.\ Martin).}

\address[wpimainaddress]{Department of Mathematical Sciences, Worcester Polytechnic Institute, Massachusetts, United States}
\address[drexeladdress]{Department of Mathematics, Drexel University, Pennsylvania, United States}

\begin{abstract}
We introduce the concept of group state transfer on graphs,  summarize its relationship to other
concepts in the theory of quantum walks, set up  a basic theory,  and discuss examples.

Let $X$ be a graph with adjacency matrix $A$ and consider quantum walks on the vertex set $V(X)$ governed by the continuous time-dependent
unitary transition operator $U(t)= \exp(itA)$. For $S,T\subseteq V(X)$, we says $X$ admits ``group state transfer'' from $S$ to $T$ at time
$\tau$ if the submatrix of $U(\tau)$ obtained by restricting to columns in $S$ and rows not in $T$ is the all-zero matrix.  As a generalization
of perfect state transfer, fractional revival and periodicity,  group state transfer satisfies  natural monotonicity and transitivity properties. Yet 
non-trivial group state transfer is still rare; using a compactness argument, we prove that bijective group state transfer (the optimal case where $|S|=|T|$)
is absent for almost all $t$. Focusing on this bijective case, we obtain a structure theorem, prove that bijective group state transfer is ``monogamous'', and study
the relationship between the projections of $S$ and $T$ into each eigenspace of the graph. 

Group state transfer is obviously preserved by graph automorphisms and this gives us information about the relationship between the setwise stabilizer of 
$S\subseteq V(X)$ and the stabilizers of naturally defined subsets obtained by spreading $S$ out over time and crudely reversing this process. These
operations are sufficiently well-behaved to give us a topology on $V(X)$ which is likely to be simply the topology of subsets for which bijective group
state transfer occurs at that time.  We illustrate non-trivial group state transfer in bipartite graphs with integer eigenvalues, in joins of graphs, and in symmetric 
double stars. The Cartesian product allows us to build new examples from old ones.
\end{abstract}

\begin{keyword}
Quantum walk, State transfer, Graph eigenvalues.
\MSC[2010]  Primary: 05C50;  Secondary: 05E25, 81Q35. 
\end{keyword}

\end{frontmatter}


\section{Introduction}
\label{Sec:intro}

Theoretical investigations in quantum computing and quantum information theory have given rise to a number of interesting questions in algebraic graph
theory and nearby areas of combinatorics. Quantum walks on graphs, in particular, seem both fundamental to our 
understanding of how to control the evolution of finite-dimensional quantum mechanical systems and quite amenable to study using the standard
tools of spectral graph theory. Since their introduction in 1998 by Farhi and Gutman \cite{FarGut} as a powerful alternative to classical Markov random 
processes, continuous time quantum walks on graphs and weighted  graphs have received much attention as researchers 
attempt to understand the potential advantages of quantum computation over classical computation. While  Farhi and Gutman  allowed for a sparse real 
Hamiltonian expressible as a sum of Hamiltonians each acting on a 
limited number of underlying qubits,  Childs proved in 2006  that we may restrict attention to Hamiltonians that are simply
adjacency matrices of graphs having maximum degree three and still efficiently simulate any quantum circuit  \cite{Childs}. 

With the path on two vertices as a  classical motivating example \cite{Bose}, Christandl, et al.\ \cite{Christandl} first demonstrated perfect 
quantum state transfer (PST) between vertices at  arbitrary distance $d$ using the product of $d$ such paths to obtain the $d$-cube. 
Graph theorists specializing in spectral techniques soon  developed a theory around such questions (see \cite{God-survey}), showing that 
perfect state transfer is quite rare.  Attention then broadened to include closely related phenomena such as 
periodicity and fractional revival as well as approximations such as pretty good state transfer \cite{VinZhed,God-survey}, among other 
interesting behavior of quantum walks on graphs such as uniform mixing. With path-length distance between vertices as a reasonable surrogate for physical 
distance between components in an implementation of  a quantum system, the hope of finding perfect state transfer between vertices 
far apart in a relatively small unweighted graph seems to have been dashed. Perfect state transfer is not only rare, but the number of 
vertices must grow at least in proportion to the cube of the distance
between the endpoints (and possibly at a much larger rate) \cite{Cout-grow}.

\paragraph{Overview of the paper}
The present work is an outgrowth of the undergraduate senior thesis \cite{LukeMQP}\footnote{The Major Qualifying Project (MQP) at Worcester Polytechnic 
Institute is a campus-wide capstone requirement of all undergraduates.} of the first author (LCB), completed in April 2019 under the 
supervision of the second author (WJM). 
In this paper, claiming no physical motivation, we introduce ``group state transfer'' by which any initial state supported on one set $S$ of vertices
is carried to some state supported on another set $T$. In full generality, group state transfer is ubiquitous: every graph $X$ admits such state transfer 
from the empty set to any subset of vertices and from any set of vertices to the entire vertex set $V(X)$. We call these cases ``trivial''. In Lemma \ref{Lem:basics},
we see how group state transfer behaves with respect to intersections, unions, complements, and time reversal. If $X$ admits group
state transfer from $S$ to $T$ at time $\tau$ then, at time $\tau$, $X$ admits group state transfer from any subset of  $S$ to any superset of $T$. 
This naturally leads (Section \ref{Sec:GST})  to a partial order on such pairs with maximal pairs of particular interest. A compactness argument is
used in Lemma \ref{Lem:usuallytrivial} to show that for all but finitely many values of $\tau$ in any finite interval $[t_0,t_1]$, the only
maximal elements are the trivial ones $(\emptyset,\emptyset)$ and $(V(X),V(X))$. In most strongly regular graphs, only trivial situations arise (Proposition
\ref{Prop:srg}). 

The fundamental inequality  $|S|\le |T|$ in Lemma \ref{Lem:SleqT} can be viewed as an entropy bound and we focus on bijective group 
state transfer, where $|S|=|T|$ in Section \ref{Sec:S=T}.  Using  Lemma \ref{Lem:usuallytrivial}, we prove (Theorem \ref{Thm:monogamous}) that
bijective group state transfer is ``monogamous'' in the sense that, aside from $S$ itself, a set $S$ can be transferred to at most one other
vertex subset of the same size.  Whenever we have bijective group state transfer from $S$ to some other set at time $\tau$, we have group
state transfer from $S$ to itself at time $2\tau$ --- i.e., $S$ is ``periodic at $2\tau$''. Godsil showed that the complement of a periodic set is
again periodic; we show that the collection of vertex subsets periodic at time $\tau$ is closed under intersection and union.
A fundamental restriction on perfect state transfer is the idea of ``parallel vertices'' \cite[Section 6.5]{CoutGod}. Analogous to this, we show
in Lemma \ref{Lem:parallel}  that, if $X$ admits bijective group state transfer from $S$ to $T$ and $E_r$ is any primitive idempotent of the
adjacency algebra of $X$, then there is an $|S|\times |S|$ unitary matrix mapping the columns of $E_r$ indexed by $S$ to the columns 
of $E_r$ indexed by $T$.  

Given a set $S$ of vertices and a time $t$, there are natural targets $R=\sI(S,-t)$ and $T=\sF(S,t)$   for group state transfer to and from $S$, respectively.
In Theorem \ref{Thm:topology},  we consider these maps $\sI(\cdot,\cdot)$ and $\sF(\cdot,\cdot)$ and their composition, which is a closure operation
on vertex subsets giving us a topology on $V(X)$ at any time $t$ (Corollary \ref{Cor:topology}).  This leads into some results in Section \ref{Sec:Aut} 
revealing how group state transfer behaves with respect to the automorphism group of the graph $X$.  

Turning toward examples, Section \ref{Sec:prod}
explores the Cartesian product and join of two graphs. In Proposition  \ref{Prop:prod}, we show that if graph $X$ admits group
state transfer from $S$ to $T$ at time $\tau$ and graph $Y$ admits group state transfer from $S'$ to $T'$ at time $\tau$, then the Cartesian 
product $X \square Y$ admits group state transfer from $S \times S'$ to $T \times T'$ at time $\tau$. In a  simple reformulation of work of 
Coutinho and Godsil \cite{CoutGod}, we find non-trivial group state transfer from $V(X)$ to itself in any join $X+Y$
(Proposition  \ref{Prop:join}). In  Section \ref{Sec:eg}, we list some further examples. For instance, in any bipartite graph $X$ whose eigenvalue 
ratios are all odd integers, we see group state transfer from one bipartite half to the other. Also in Theorem \ref{Thm:bipartite}, we see
periodicity on each bipartite half under weaker conditions. Periodicity is also shown in the
symmetric double star in Proposition \ref{Prop:doublestar}.  We finish the paper with a few more examples and a list of open problems.

\section{Preliminaries}
\label{Sec:prelim}

Throughout, $X=(V(X),E(X))$ is a finite simple graph on $n$ vertices with adjacency matrix $A$. For simplicity, we will sometimes write $V(X)=\{1,\ldots,n\}$. When
$a$ and $b$ are joined by an edge, we write $a\sim b$ or ($ab\in E(G)$) and we use $X(a)=\{ b \in V(X) \mid a \sim b\}$ to denote the neighborhood of $a$ in $X$.
The distance between $a$ and $b$ in $X$, denoted $\partial(a,b)$, is the length of a shortest path joining the two.

The unitary time-dependent transition operator $U(t)=U_X(t)$ is given by
$$ U(t) = \exp( itA) = \sum_{k=0}^\infty \frac{ (it)^k }{ k! } A^k  $$
where $t$ is any real number.
As shown, for example, by Coutinho and Godsil in their text \cite{CoutGod}, the spectral decomposition of $A$ carries over to a useful expression for $U(t)$. Throughout, we suppose that graph $X$ has $d+1$ distinct eigenvalues $\theta_0 > \theta_1 > \cdots > \theta_d$. We denote by $E_r$ the
matrix representing orthogonal projection onto the  eigenspace belonging to $\theta_r$, $V_r =\left\{ \vp \in \cx^n \  \middle| \  A \vp = \theta_r \vp \right\}$. Then we have
$ A = \sum_{r=0}^d \theta_r E_r$ where the various projections sum to the identity:   $\sum_{r=0}^d E_r = I$. 
This gives \cite[Section~1.5]{CoutGod}
\begin{equation}
\label{Eqn:U(t)}
 U(t) =  \sum_{r=0}^d    e^{it \theta_r} E_r ~ .
 \end{equation}

\section{Group state transfer and a partial order on subset pairs}
\label{Sec:GST}

We now give the central definition of this paper. We say graph $X$ admits group state transfer from $S \subseteq V(X)$ to $T\subseteq V(X)$  at
time $\tau$ if the evolution operator $U(\tau)$ carries every initial state vector whose support is contained in $S$ to some vector whose support is
contained in $T$.

\begin{definition}
\label{group_state_transfer}
Let $X$ be a graph and let $S,T\subseteq V(X)$. We say that $X$ has $(S,T)$-\emph{group state transfer}, or $(S,T)$-GST, at time 
$\tau\in\re$ if, for all $\psi\in\cx^n$ such that  $\supp \psi\subseteq S$, the vector $\phi=U_X(\tau)\psi$ satisfies $\supp \phi\subseteq T$.
\end{definition}

For $S\subseteq V(X)$, denote by $\sub{S}$ the subspace of $\cx^n$ of vectors whose support is contained in $S$: 
$\sub{S} = \spn \left\{ e_a | a\in S\right\}$. 
For $S,T\subseteq V(X)$, we have $(S,T)$-GST at time $\tau$ if $U(\tau) \sub{S} \subseteq \sub{T}$.

\paragraph{Familiar examples} Trivial examples include $S=\emptyset$ and $T=V(X)$: for any $R \subseteq V(X)$ and for any 
$\tau \in \re$, we have both $(\emptyset,R)$-GST and 
$(R,V(X))$-GST at  time $\tau$. Our definition of group state transfer, while having no direct physical motivation, generalizes some important 
phenomena that have received much attention in  the quantum information theory community recently. The graph $X$ is said to 
be \emph{periodic} at $a$ at time $\tau$ if $X$ has  $(\{a\}, \{a\})$-GST  at time $\tau$ and, for
$b\neq a$, we say that we have \emph{perfect state transfer} ($ab$-PST) from $a$ to $b$ in $X$ at time $\tau$ 
if $X$ has  $(\{a\}, \{b\})$-GST  at time $\tau$.  The graph $X$ has 
\emph{fractional revival} on $S=\{a,b\}$ at time $\tau$ if $X$ has $(\{a\},S)$-GST at time $\tau$. We use the term \emph{proper fractional revival} when
this holds with  $U(\tau)_{a,b} \neq 0$. (I.e.,  $(\{a\}, \{a,b\})$-GST occurs at time $\tau$ but $(\{a\}, \{a\})$-GST does not.)  It is already 
known that, if $X$ has $(\{a\},\{a,b\})$-GST  at time  $\tau$ then either $a$ is periodic or $X$ has $(\{b\},\{a,b\})$-GST at time $\tau$; see, e.g., Lemma 9.9.1 in \cite{CoutGod}. So $(\{a\}, \{a,b\})$-GST at time $\tau$ implies either that $X$ is periodic at $a$, PST occurs from $a$ to $b$, or we have proper fractional 
revival on $\{a,b\}$ in $X$ (all at time $\tau$). For $S\subseteq V(X)$, the set $S$ is a  \emph{periodic subset} \cite[Section~9.6]{CoutGod} 
if $X$ has $(S,S)$-GST at at some 
time $\tau$ (in which case we say $S$ is \emph{periodic at time} $\tau$)\footnote{Note that, in \cite{CoutGod},  a graph $X$ is said to be 
``periodic'' at time $\tau$ if $U(\tau)$ is a  diagonal matrix; that is, every subset of $V(X)$  is periodic at time $\tau$.}. 

\paragraph{Basic results} We begin with a number of elementary observations that already impose a good deal of structure on the group state transfer 
phenomenon.

\begin{lemma}
\label{Lem:basics}
Let $X$ be a  simple undirected graph. Then
\begin{itemize}
\item[(a)] $X$ admits $(S, V(X))$-GST at time $\tau$ for all $S\subseteq V(X)$ and all times $\tau$;
\item[(b)] $X$ admits $(\emptyset, T)$-GST at time $\tau$ for all $T\subseteq V(X)$ and all times $\tau$.
\item[(c)] $X$ has $(S,T)$-GST at time $\tau$ if and only if $X$ has $( \{a\},T)$-GST at time $\tau$ for every $a\in S$;
\item[(d)] if $S' \subseteq S$ and $T \subseteq T'$ and $(S,T)$-GST occurs at time $\tau$, then $(S',T')$-GST also occurs at time $\tau$;
\item[(e)] if, at time $\tau$, graph $X$ has $(S_1,T_1)$-GST and $(S_2,T_2)$-GST, then $X$ has both $(S_1\cap S_2, T_1\cap T_2)$-GST and 
$(S_1\cup S_2, T_1\cup T_2)$-GST at time $\tau$;
\item[(f)]  if $X$ has $(R,S)$-GST at time $\sigma$  and $X$ has $(S,T)$-GST at time $\tau$,  then  $X$ has $(R,T)$-GST at time $\sigma+\tau$;
\item[(g)] $X$ has $(S,T)$-GST at time $\tau$ if and only if $X$ has $(V(X)\setminus T, V(X)\setminus S)$-GST at time $\tau$;
\item[(h)]  $X$ has $(S,T)$-GST at time $\tau$ if and only if $X$ has $(S,T)$-GST at time $-\tau$.
\end{itemize}
\end{lemma}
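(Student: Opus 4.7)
The plan is to rewrite the $(S,T)$-GST condition in two equivalent and useful ways: as the subspace inclusion $U(\tau)\sub{S}\subseteq\sub{T}$, and as the entrywise vanishing condition $U(\tau)_{b,a}=0$ for all $a\in S$, $b\notin T$. All eight parts then reduce to elementary linear algebra together with three properties of the evolution operator. First, $U(\sigma)U(\tau)=U(\sigma+\tau)$ and $U(0)=I$, because $A$ commutes with itself. Second, $U(\tau)$ is complex symmetric, i.e., $U(\tau)^{\top}=U(\tau)$, because each power $A^{k}$ is real symmetric and so is every partial sum of the defining exponential series. Third, $U(-\tau)=U(\tau)^{-1}=\overline{U(\tau)}$; the last equality follows either from complex symmetry combined with unitarity, or directly from the spectral formula (\ref{Eqn:U(t)}) since the projections $E_{r}$ are real.

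Parts (a) and (b) are immediate once we note $\sub{V(X)}=\cx^{n}$ and $\sub{\emptyset}=\{\zero\}$. Part (c) is linearity: $\sub{S}=\spn\{e_{a}\mid a\in S\}$, so $U(\tau)\sub{S}\subseteq\sub{T}$ holds exactly when each image $U(\tau)e_{a}$ lies in $\sub{T}$. Part (d) is monotonicity of $\sub{\cdot}$ under set inclusion combined with the forward implication of (c). For part (e) I will use $\sub{S_{1}\cap S_{2}}=\sub{S_{1}}\cap\sub{S_{2}}$ and $\sub{S_{1}\cup S_{2}}=\sub{S_{1}}+\sub{S_{2}}$ together with the obvious facts that $U(\tau)$, being linear, respects both intersections and sums of subspaces; the same identities applied to the $\sub{T_{i}}$ side close the argument. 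Part (f) is composition: $U(\sigma+\tau)\sub{R}=U(\tau)\bigl(U(\sigma)\sub{R}\bigr)\subseteq U(\tau)\sub{S}\subseteq\sub{T}$.

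The two substantive items are (g) and (h). For (g) I will translate $(S,T)$-GST into the entrywise statement $U(\tau)_{b,a}=0$ for all $a\in S$, $b\notin T$; using $U(\tau)^{\top}=U(\tau)$ this is equivalent to $U(\tau)_{a,b}=0$ for the same index pairs, which, after swapping the roles of rows and columns, is exactly the condition for $(V(X)\setminus T,\,V(X)\setminus S)$-GST. Part (h) uses $U(-\tau)=\overline{U(\tau)}$: a complex number vanishes if and only if its conjugate vanishes, so the entrywise condition is insensitive to the sign of $\tau$. Among all eight items, (g) is the only step where it matters that $A$ is \emph{symmetric} (so that $U(\tau)$ is complex symmetric, not merely Hermitian); this is the one place where a naive adjoint argument would produce $\overline{U(\tau)}$ rather than $U(\tau)$, and is the main subtlety to watch for in an otherwise routine verification.
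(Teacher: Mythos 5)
Your proof is correct and follows essentially the same route as the paper: reduce everything to the subspace formulation $U(\tau)\sub{S}\subseteq\sub{T}$ (equivalently, vanishing of the submatrix on rows $V(X)\setminus T$ and columns $S$), then use linearity for (c)--(f), complex symmetry $U(\tau)^{\top}=U(\tau)$ for (g), and $U(-\tau)=\overline{U(\tau)}$ for (h). The only cosmetic difference is that you prove (c) directly by linearity while the paper deduces it from (d) and (e), and you are somewhat more explicit about why $U(\tau)$ is symmetric rather than merely Hermitian, which is indeed the one point worth flagging.
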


\begin{proof}
Parts \textit{(a)} and \textit{(b)} are vacuous. For part \textit{(d)}, we simply observe that, if $U(\tau) \sub{S} \subseteq \sub{T}$, then 
$U(\tau) \sub{S'} \subseteq \sub{T'}$
since $S' \subseteq S$, $T \subseteq T'$ give $\sub{S'} \subseteq \sub{S}$ and $\sub{T}\subseteq \sub{T'}$, respectively. Part \textit{(e)}: suppose
$\varphi \in \sub{S_1\cup S_2} = \sub{S_1} + \sub{S_2}$. Then $U(\tau)\varphi \in  \sub{T_1} + \sub{T_2} =  \sub{T_1\cup T_2}$. (The preservation
of intersections is proved in a similar manner.) Now \textit{(c)} follows from \textit{(d)} and \textit{(e)}.   Part \textit{(f)} is also straightforward. 
Part \textit{(g)} follows from the fact that 
$U(t)$ is a symmetric matrix. Since $U(-\tau)=U(\tau)^{-1} = \overline{U(\tau)}$,  we see that $U(\tau)$ and $U(-\tau)$ have precisely the same set of all-zero
submatrices.
\end{proof}

\begin{example}
\label{Ex:cube}
Suppose graph $X$ admits $a_i b_i$-PST at time $\tau$ for $i=1,\ldots,\ell$. Then, with $S=\{a_1,\ldots, a_\ell\}$ and $T=\{b_1,\ldots,b_\ell\}$,
$X$ admits $(S,T)$-GST at  $\tau$. For instance,  the $d$-cube has PST at time $\pi/2$ from any vertex to its antipode. Let $S\subseteq V(X)$ and 
choose $T$ to consist of  the antipodes of the elements of $S$;  this provides us 
examples with $|S|=|T|$ taking any value up to $|V(X)|=n$ when $X$ is the $d$-cube.
\end{example}

\begin{lemma}
\label{Lem:SleqT}
If graph $X$ has $(S,T)$-GST at $\tau$, then $|S|\leq |T|$.
\end{lemma}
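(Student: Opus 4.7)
The plan is to exploit the fact that $U(\tau)$ is unitary, hence injective on all of $\cx^n$, and combine this with a simple dimension count on the two subspaces $\sub{S}$ and $\sub{T}$. Since the hypothesis of $(S,T)$-GST is phrased as the subspace containment $U(\tau)\sub{S}\subseteq\sub{T}$, everything reduces to comparing dimensions.

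More concretely, the steps I would carry out are as follows. First, I would observe that $\dim\sub{S}=|S|$ and $\dim\sub{T}=|T|$, since by definition $\sub{S}=\spn\{e_a\mid a\in S\}$ and the $e_a$ are linearly independent. Second, I would invoke unitarity: $U(\tau)U(-\tau)=I$ (from equation (\ref{Eqn:U(t)}) or simply from the definition of $U(t)$), so $U(\tau)$ is invertible and its restriction to the subspace $\sub{S}$ is injective. Therefore $\dim U(\tau)\sub{S}=\dim\sub{S}=|S|$. Third, the defining hypothesis of $(S,T)$-GST is $U(\tau)\sub{S}\subseteq \sub{T}$, and a subspace cannot have dimension larger than a subspace that contains it, so $|S|=\dim U(\tau)\sub{S}\le \dim\sub{T}=|T|$.

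There is no real obstacle here: the entire argument is a two-line dimension-counting observation that piggybacks on the unitarity of $U(\tau)$. If a reader prefers a matrix-theoretic phrasing, one can equivalently note that the definition of $(S,T)$-GST is that the submatrix of $U(\tau)$ with columns indexed by $S$ and rows indexed by $V(X)\setminus T$ is zero, so all nonzero entries of those $|S|$ columns sit in the $|T|$ rows indexed by $T$; since $U(\tau)$ is unitary, its columns are orthonormal and in particular linearly independent, so the $|S|\times |T|$ submatrix obtained by restricting to $T$-rows and $S$-columns has $|S|$ linearly independent columns, forcing $|S|\le |T|$. Either phrasing takes only a couple of sentences and uses no machinery beyond Definition \ref{group_state_transfer} and the unitarity of $U(\tau)$.
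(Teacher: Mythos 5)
Your proof is correct and is essentially the same as the paper's: the paper also argues that $U(\tau)$ is injective and $U(\tau)\sub{S}\subseteq\sub{T}$, hence $\dim\sub{S}\le\dim\sub{T}$. The extra matrix-theoretic phrasing you offer is a fine equivalent restatement but adds nothing beyond the paper's one-line dimension count.
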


\begin{proof}
Since $U(\tau)$ is injective and $U(\tau) \sub{S} \subseteq \sub{T}$, we have $\dim \sub{S} \le \dim \sub{T}$.
\end{proof}

\paragraph{The State transfer poset} We now introduce the state transfer poset of a graph $X$. Writing $\cP$ for the power set of $V(X)$, 
$$ \cP = \cP (V(X))= \left\{ S \middle| S \subseteq V(X) \right\}, $$
we begin with the poset $( \cP \times \cP, \preceq)$ with partial order 
relation $(S,T) \preceq (S',T')$ if $S\subseteq S'$ and $T' \subseteq T$. For each time $t$, the  \emph{state transfer poset} of $X$ at time 
$t$ is the subposet of this partially ordered set,   depicted in Figure
\ref{Fig:halfcan},  consisting only of those pairs $(S,T)$ for which $X$ has GST at time $t$; this smaller
partially ordered set is denoted $\ST(X,t)$. Note that, at any time $t$,  $\ST(X,t)$ contains the \emph{trivial} pairs $(\emptyset ,T)$ for all $T \subseteq V(X)$ and 
$(S,V(X))$ for all $S \subseteq V(X)$ but may otherwise depend on $t$. One may alternatively view this collection
of pairs $(S,T)$ for which $X$ has GST at time $t$ as a down-set (or ``downward closed set'') in the original poset $( \cP \times \cP, \preceq)$.
This is nothing more than the poset formed by the all-zero submatrices of $U(t)$; we have $(S,T)$-GST at time $t$ precisely when the submatrix of 
$U(t)$ obtained by restricting to rows indexed by elements of $V(X)\setminus T$ and columns indexed by elements of $S$ has all entries zero.

In Figure \ref{Figure:ST(K_2)}, we give the state transfer poset for the path on two vertices $X=K_2$ at time $\tau=\pi/2$.

\definecolor{myblue}{RGB}{200,200,240}

\begin{figure}  
\begin{center}
\begin{tikzpicture}
\def \exsh {-1.27*0.5*2}
\def \eysh {1.27*0.866*2}
\draw[myblue,fill=myblue] (0.58,-1.13) -- (3.95,1.15) -- (0.58+\exsh,-1.13+\eysh) -- (0.58,-1.13);
\draw[rotate=30,fill=myblue]  (0,0) ellipse (0.625 and 1.27);
\draw[rotate=30]  (4,0.29) ellipse (0.625 and 1.27);
\draw[blue,fill=myblue] (1.7,1.15) ellipse (2.25 and 0.12);
\draw[rotate=30] (0,0) ellipse (0.625 and 1.27);
\draw (0.58,-1.13) -- (4.01,1.19);
\draw (0.59+\exsh,-1.13+\eysh) -- (3.98+\exsh,1.17+\eysh);
\node[red] (S0TV) at  (1.58,-1.13)  {$\scriptscriptstyle{(\emptyset,V(X))}$};
\node (SVT0) at  (3.01+\exsh,1.17+\eysh) {$\scriptscriptstyle{(V(X),\emptyset)}$};
\node[red] (S0T0) at (4.6,1.15) {$\scriptscriptstyle{(\emptyset,\emptyset)}$};
\node[red] (SVTV) at (-1.5,1.12) {$\scriptscriptstyle{(V(X),V(X))}$};
\node[red] (PST) at (2.8,1.15) {\tiny{PST}};
\node[rotate=30]  (Left) at (0,0) {$\scriptscriptstyle{  \cP \!\times \!\{V(X)\}}$};
\node[rotate=30]  (Right) at (3.3,2.29) {$\scriptscriptstyle{  \cP \times \{\emptyset\}}$}; 
\node[rotate=35] (rev) at (2.5,-0.2) {\tiny{reverse inclusion}};
\end{tikzpicture} 
\end{center}
\caption{The partially ordered set on $(V(X)\times V(X), \preceq)$ with $(S',T')\preceq(S,T)$ when $S'\subseteq S$ and $T' \supseteq T$. 
Here, $\cP$ denotes the power set of $V(X)$ ordered by containment. The blue region indicates
pairs $(S,T)$ with $|S|\le |T|$ and necessarily contains $\ST(X,\tau)$. Ideal GST occurs at the upper boundary of the blue region, with PST as a special case.
\label{Fig:halfcan}}
\end{figure}
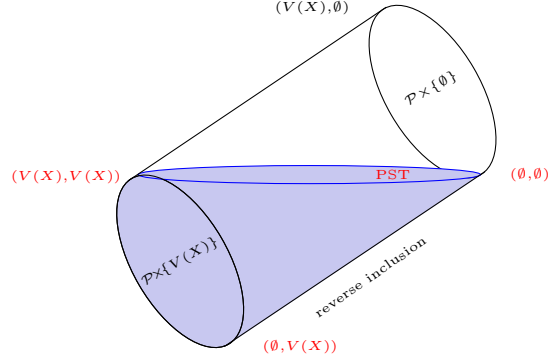  

\paragraph{The extremal case} Let us say that $X$ has \emph{maximal group state transfer} from $S$ to $T$ 
at time $\tau$ if $X$ has $(S,T)$-GST at $\tau$ and, whenever 
$X$ has $(S',T')$-GST at $\tau$ for $S\subseteq S'$ and $T'\subseteq T$, $S'=S$ and $T'=T$. Focusing on a more rare situation, we say $X$ has 
\emph{bijective group state transfer} from $S$ to $T$ at time $\tau$ if $X$ has $(S,T)$-GST at $\tau$ and $|S|=|T|$. Lemma \ref{Lem:SleqT} tells us
that  bijective implies maximal. Given $S\subseteq V(X)$, the maximal element of $\ST(X,t)$ of the form $(S,T)$ is $(S,\sF(S,t))$ where
$$ \sF(S,t) = \left\{ a \in V(X) \  \middle| \ (\exists \varphi \in \sub{S}) ( e_a^\top U(t) \varphi \neq 0) \right\}; $$
that is, $X$ has $(S,T)$-GST at time $t$ if and only if $T \supseteq \sF(S,t)$.  \label{Sec:sF}

\paragraph{Smallest non-trivial elements of the poset}
The most common (and least interesting) case of non-trivial GST (i.e., where $S\neq \emptyset$ and $T\neq V(X)$)  occurs where $U(\tau)$ has some
entry equal to zero: $X$ exhibits $( \{a\}, V(X) \setminus \{b\})$-GST at time $\tau$ if and only if $U(\tau)_{b,a}=0$.  Even this fails almost everywhere.

\begin{lemma}
\label{Lem:usuallytrivial}
Assume $X$ is a connected graph. In any interval $[t_0,t_1]$ of finite 
length, there are only finitely many $t$ for which $\ST(X,t)$ contains non-trivial pairs.
\end{lemma}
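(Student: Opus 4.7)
The plan is to reduce the statement to a claim about individual entries of $U(t)$. A pair $(S,T) \in \ST(X,t)$ is non-trivial exactly when $S \neq \emptyset$ and $T \neq V(X)$; choosing any $a \in S$ and $b \in V(X) \setminus T$, parts \textit{(c)} and \textit{(d)} of Lemma~\ref{Lem:basics} give $(\{a\}, V(X)\setminus\{b\})$-GST, which forces $U(t)_{b,a}=0$. Conversely, if $U(t)_{b,a}=0$ then $(\{a\}, V(X)\setminus\{b\})$ is a non-trivial element of $\ST(X,t)$. Hence the set of ``bad'' times is exactly the union, over the $n^2$ ordered pairs $(a,b)$, of the zero sets of the scalar functions $f_{a,b}(t) := U(t)_{b,a}$.

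Next, I would invoke the spectral expansion \eqref{Eqn:U(t)} to write
\[ f_{a,b}(t) = \sum_{r=0}^d (E_r)_{b,a} \, e^{it\theta_r}, \]
which is a finite $\cx$-linear combination of complex exponentials and therefore extends to an entire function of a complex variable $t$. Since non-zero analytic functions have isolated zeros, if $f_{a,b}$ is not identically zero then its real zero set meets the compact interval $[t_0,t_1]$ in only finitely many points. Taking the union over the finitely many pairs $(a,b)$ then gives finiteness of the full set of bad times.

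The only substantive step is showing that $f_{a,b} \not\equiv 0$, and this is where connectedness of $X$ enters. If $f_{a,b}$ were identically zero, then all its Taylor coefficients at $t=0$ would vanish; the $k$th derivative at $0$ is $i^k (A^k)_{b,a}$, so we would have $(A^k)_{b,a} = 0$ for every $k \geq 0$. But in a connected graph there is some walk from $a$ to $b$ of length $k=\partial(a,b)$, giving $(A^k)_{b,a} > 0$, a contradiction. I expect no other obstacles; the whole argument is a compactness-plus-analyticity calculation, and the translation between zero entries of $U(t)$ and non-trivial GST pairs is immediate from the basic properties already recorded in Lemma~\ref{Lem:basics}.
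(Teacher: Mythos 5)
Your proof is correct, and its skeleton is the same as the paper's: reduce non-trivial pairs in $\ST(X,t)$ to vanishing entries $U(t)_{b,a}$ via Lemma~\ref{Lem:basics}, note that $f_{a,b}(t)=\sum_{r}(E_r)_{b,a}e^{it\theta_r}$ is analytic so that its zero set in a compact interval is finite unless $f_{a,b}\equiv 0$, and use connectedness to exclude identical vanishing. Where you genuinely diverge is in that last step, and your version is the cleaner one: computing the Taylor coefficients $f_{a,b}^{(k)}(0)=i^k(A^k)_{b,a}$ and observing that $(A^{\partial(a,b)})_{b,a}>0$ counts walks kills $f_{a,b}\equiv 0$ in one line. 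The paper instead argues that $U(\epsilon)$ has $d+1$ distinct eigenvalues for suitable $\epsilon$, so that $\{U(t)\mid t\in\re\}$ spans the whole adjacency algebra, which contains a matrix with nonzero $(b,a)$-entry; that works but needs the extra generation argument. You also replace the paper's explicit convergent-subsequence construction with the standard fact that a not-identically-zero entire function has isolated zeros --- equivalent content, stated more economically. I see no gaps.
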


\begin{proof}
We need only show that, for $a,b\in V(X)$, $U(t)_{b,a} =0$ for at most finitely many values of $t\in [t_0,t_1]$. Assume not. By compactness, there 
exists a convergent sequence $\{t_n\}_{n=1}^\infty$ of  values all satisfying $U(t_n)_{b,a} =0$. Define
$$ f(t) = \sum_{r=0}^d e^{i \theta_r t} (E_r)_{b,a} ~ .$$
Then $f(t)$ is analytic and $f(t_n)=0$ for all $n$. So, defining $t^* = \lim_{n\rightarrow \infty} t_n$, we obtain $f(t^*)=0$ by continuity. Similarly, every derivative
of $f$ is zero at $t^*$. Since $f$ is analytic, it must be the zero function. But we are assuming that $X$ is connected, so some
element of the adjacency algebra $\sub{A}$ has a nonzero value in its $(b,a)$-position. Since the $\theta_r$ are distinct, there is some $\epsilon >0$ for which 
$U(\epsilon)$ has $d+1$ distinct eigenvalues. Therefore the set $\{ U(t) \mid t\in \re\}$, closed under multiplication, generates $\sub{A}$ and there must be 
some time $t$ at which $U(t)_{b,a} \neq 0$, giving us the desired contradiction.
\end{proof}

\begin{center}  
\begin{figure}[b] 
\resizebox{6in}{!}{\begin{tikzpicture}
\node[red] (ST12) at (0,0) {$\scriptscriptstyle{(\emptyset,\{1,2\})}$};
\node[red] (ST1) at (-0.75,1) {$\scriptscriptstyle{(\emptyset,\{1\})}$};
\node[red] (ST2) at (0.75,1) {$\scriptscriptstyle{(\emptyset,\{2\})}$};
\node[red] (ST) at (0,2) {$\scriptscriptstyle{(\emptyset,\emptyset)}$};
\node[red] (S1T12) at (-3,1.85) {$\scriptscriptstyle{(\{1\},\{1,2\})}$};
\node (S1T1) at (-3.75,2.85) {$\scriptscriptstyle{(\{1\},\{1\})}$};
\node[red] (S1T2) at (-2.25,2.85) {$\scriptscriptstyle{(\{1\},\{2\})}$};
\node (S1T) at (-3,3.85) {$\scriptscriptstyle{(\{1\},\emptyset)}$};
\node[red] (S2T12) at (3.5,1.85) {$\scriptscriptstyle{(\{2\},\{1,2\})}$};
\node[red] (S2T1) at (2.75,2.85) {$\scriptscriptstyle{(\{2\},\{1\})}$};
\node (S2T2) at (4.25,2.85) {$\scriptscriptstyle{(\{2\},\{2\})}$};
\node (S2T) at (3.5,3.85) {$\scriptscriptstyle{(\{2\},\emptyset)}$};
\node[red] (S12T12) at (-0.0,3.7) {$\scriptscriptstyle{(\{1,2\},\{1,2\})}$};
\node (S12T1) at (-0.75,4.7) {$\scriptscriptstyle{(\{1,2\},\{1\})}$};
\node (S12T2) at (0.75,4.7) {$\scriptscriptstyle{(\{1,2\},\{2\})}$};
\node (S12T) at (-0.0,5.7) {$\scriptscriptstyle{(\{1,2\},\emptyset)}$};
\draw[black,thin] (ST12) -- (ST1);   \draw[black,thin] (ST12) -- (ST2);   \draw[black,thin] (ST1) -- (ST);   \draw[black,thin] (ST2) -- (ST);    
\draw[black,thin] (S1T12) -- (S1T1);   \draw[black,thin] (S1T12) -- (S1T2);   \draw[black,thin] (S1T1) -- (S1T);   \draw[black,thin] (S1T2) -- (S1T); 
\draw[black,thin] (S2T12) -- (S2T1);   \draw[black,thin] (S2T12) -- (S2T2);   \draw[black,thin] (S2T1) -- (S2T);   \draw[black,thin] (S2T2) -- (S2T); 
\draw[black,thin] (S12T12) -- (S12T1);   \draw[black,thin] (S12T12) -- (S12T2);   \draw[black,thin] (S12T1) -- (S12T);   \draw[black,thin] (S12T2) -- (S12T); 
\draw[orange,thin] (ST12) -- (S1T12);   \draw[orange,thin] (ST1) -- (S1T1);   \draw[orange,thin] (ST2) -- (S1T2);   \draw[orange,thin] (ST) -- (S1T);    
\draw[orange,thin] (S2T12) -- (S12T12);   \draw[orange,thin] (S2T1) -- (S12T1);   \draw[orange,thin] (S2T2) -- (S12T2);   \draw[orange,thin] (S2T) -- (S12T);  
\draw[green,thin] (ST12) -- (S2T12);   \draw[green,thin] (ST1) -- (S2T1);   \draw[green,thin] (ST2) -- (S2T2);   \draw[green,thin] (ST) -- (S2T);    
\draw[green,thin] (S1T12) -- (S12T12);   \draw[green,thin] (S1T1) -- (S12T1);   \draw[green,thin] (S1T2) -- (S12T2);   \draw[green,thin] (S1T) -- (S12T);  
\end{tikzpicture}  
\begin{tikzpicture}
\node (ST12) at (0,0) {$\scriptscriptstyle{(\emptyset,\{1,2\})}$};
\node (ST1) at (-0.75,1) {$\scriptscriptstyle{(\emptyset,\{1\})}$};
\node (ST2) at (0.75,1) {$\scriptscriptstyle{(\emptyset,\{2\})}$};
\node (ST) at (0,2) {$\scriptscriptstyle{(\emptyset,\emptyset)}$};
\node (S1T12) at (-3,1.85) {$\scriptscriptstyle{(\{1\},\{1,2\})}$};
\node (S1T2) at (-2.25,2.85) {$\scriptscriptstyle{(\{1\},\{2\})}$};
\node (S2T12) at (3.5,1.85) {$\scriptscriptstyle{(\{2\},\{1,2\})}$};
\node (S2T1) at (2.75,2.85) {$\scriptscriptstyle{(\{2\},\{1\})}$};
\node (S12T12) at (-0.0,3.7) {$\scriptscriptstyle{(\{1,2\},\{1,2\})}$};
\draw[blue,thin] (ST12) -- (ST1);   \draw[blue,thin] (ST12) -- (ST2);   \draw[blue,thin] (ST1) -- (ST);   \draw[blue,thin] (ST2) -- (ST);   
 \draw[blue,thin] (S1T12) -- (S1T2);   \draw[blue,thin] (S2T12) -- (S2T1);   
\draw[blue,thin] (ST12) -- (S1T12);    \draw[blue,thin] (ST2) -- (S1T2);   
\draw[blue,thin] (S2T12) -- (S12T12);  
\draw[blue,thin] (ST12) -- (S2T12);   \draw[blue,thin] (ST1) -- (S2T1);   
\draw[blue,thin] (S1T12) -- (S12T12);   
\end{tikzpicture} }
\caption{The poset  $(  \cP (V(K_2) )\times  \cP (V(K_2)), \preceq)$  on the left (reverse inclusion highlighted in black) 
with the subposet identified in red giving us  the state transfer poset $\ST(K_2,\frac{\pi}{2})$ for $K_2$ at time $\tau=\pi/2$.
\label{Figure:ST(K_2)}}
\end{figure}
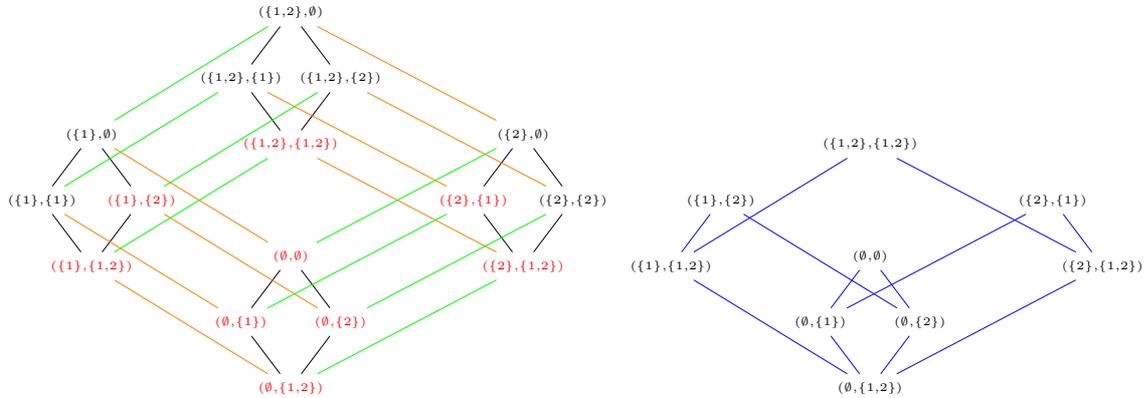 
\end{center}

\paragraph{Strongly regular graphs} For some graphs $X$, there is no value of $t$ in $(0,2\pi)$ for which $\ST(X,t)$ is non-trivial,  as we now illustrate.

A graph $X$ is \emph{strongly regular} with parameters $(\nu,\kappa,\lambda,\mu)$ if $|V(X)|=\nu$ and $|X(a)\cap X(b)|=\kappa,\lambda,\mu$, accordingly, as
$a=b$, $a\sim b$ and $b \not\in \{a\} \cup X(a)$, respectively. We say $X$ is an  $\srg(\nu,\kappa ,\lambda,\mu)$.  
Write $A_0=I$, $A_1=A$ and $A_2=J-I-A$; these form a vector space basis for the adjacency  algebra of $X$.
Standard tools (e.g., \cite[Chapter~10]{GodRoy}) give us the eigenvalues and their multiplicities:
$$ \theta_0 = \kappa, \qquad \theta_1 = \frac12 \left( \lambda - \mu + \sqrt{\Delta} \right), \qquad \theta_2 = \frac12 \left( \lambda - \mu - \sqrt{\Delta} \right) $$
where  $ \Delta = (\mu-\lambda)^2 + 4 (\kappa -\mu)$.
The respective eigenvalue multiplicities for $\theta_1$ and $\theta_2$ are
$$ f = \frac{1}{2}\left( \nu-1 + \frac{(\nu-1)(\mu-\lambda)-2\kappa }{\sqrt{\Delta}} \right), \qquad
g = \frac{1}{2}\left( \nu-1 - \frac{(\nu-1)(\mu-\lambda)-2\kappa }{\sqrt{\Delta}} \right).$$
Except when $f=g$, $\theta_1$ and $\theta_2$ must be integers.  It is well-known that $E_0= \frac{1}{\nu}J$, 
\begin{eqnarray*}
E_1 &=&  \frac{1}{\nu} \left(    f A_0 + \frac{f\theta_1}{\kappa } \, A_1 + \frac{f(1+\theta_1)}{\kappa +1-\nu}   \  A_2   \right),  \\
E_2 &=&  \frac{1}{\nu} \left(    g A_0 + \frac{g\theta_2}{\kappa } \, A_1 + \frac{g(1+\theta_2)}{\kappa +1-\nu}   \  A_2    \right) ~.
\end{eqnarray*}
Choose a base vertex $b\in V(X)$ and define the $\nu \times 3$ matrix $H$ whose columns are $e_b$, $Ae_b$ and $A_2 e_b$.
Since the partition according to distance from $b$ is equitable, we have 
$ A H = H B$
for 
$$ B = \left[ \begin{array}{ccc}   0 & \kappa  &  0 \\   1 &  \lambda & \kappa -1-\lambda  \\   0  &  \mu  & \kappa -\mu \end{array} \right].  $$
Omitting details, we find that  $U(t) = \exp( it A) = e^{it \kappa } E_0 + e^{it \theta_1} E_1 + e^{it \theta_2} E_2$
satisfies $U(t) H = H U'(t)$  where
$ U'(t) = e^{it \kappa } F_0 + e^{it \theta_1} F_1 + e^{it \theta_2} F_2  $ 
with
$$ F_0 = \frac{1}{\nu} \left[ \begin{array}{ccc}  1 & \kappa  & \nu-1-\kappa  \\  1 & \kappa  & \nu-1-\kappa  \\  1 & \kappa  & \nu-1-\kappa   \end{array} \right], \qquad 
F_1 = \frac{f}{\nu} \left[ \begin{array}{ccc}  1 & \theta_1 & -1-\theta_1 \\
\frac{\theta_1}{\kappa } & \frac{ \kappa + \lambda \theta_1 + \mu (-1-\theta_1) }{\kappa } &  \frac{-\theta_1- \kappa  - \lambda \theta_1 + \mu (1+\theta_1) }{\kappa } \\
\frac{-1-\theta_1}{\nu-1-\kappa } & \frac{ -\theta_1- \kappa  - \lambda \theta_1 + \mu (1+\theta_1)   }{\nu-1-\kappa } &  \frac{ 2\theta_1+1+ \kappa  + \lambda \theta_1 + \mu (-1-\theta_1)   }{\nu-1-\kappa }
\end{array} \right] $$
and
$$ 
F_2 = \frac{g}{\nu} \left[ \begin{array}{ccc}  1 & \theta_2 & -1-\theta_2 \\
\frac{\theta_2}{\kappa } & \frac{ \kappa + \lambda \theta_2 + \mu (-1-\theta_2) }{\kappa } &  \frac{-\theta_2- \kappa  - \lambda \theta_2 + \mu (1+\theta_2) }{\kappa } \\
\frac{-1-\theta_2}{\nu-1-\kappa } & \frac{ -\theta_2- \kappa  - \lambda \theta_2 + \mu (1+\theta_2)   }{\nu-1-\kappa } &  \frac{ 2\theta_2+1+ \kappa  + \lambda \theta_2 + \mu (-1-\theta_2)   }{\nu-1-\kappa }
\end{array} \right] .$$
If the system is in initial state $e_b$ at time zero, then at time $t$, the state of the system is given by 
$U(t) e_b = H U'(t) \left[ \begin{array}{ccc} \scriptstyle{1} & \scriptstyle{0}  & \scriptstyle{0} \end{array} \right]^\top$. 
So $U(t) e_b$ is constant on the neighbors of $b$ and and on the non-neighbors of $b$. Define
\begin{eqnarray*}
h_0(t) &=& e^{i\kappa t} + f e^{i\theta_1t} + g e^{i\theta_2t}  \\
h_1(t) &=& e^{i\kappa t} + (f\theta_1/\kappa ) e^{i\theta_1t} + (g\theta_2/\kappa ) e^{i\theta_2t}  \\
h_2(t) &=& e^{i\kappa t} + f(1+\theta_1)/(\kappa +1-\nu) e^{i\theta_1t} + g(1+\theta_2)/(\kappa +1-\nu) e^{i\theta_2t} ~ .
\end{eqnarray*}
Then $e_a^\top U(t)e_b= \frac{1}{\nu} h_\delta(t)$ where $\delta \in \{0,1,2\}$ is the distance from $a$ to $b$ in $X$. This tells us that 
GST almost never occurs on strongly regular graphs.

\begin{proposition}
\label{Prop:srg}
Let $X$ be a connected strongly regular graph with non-trivial $(S,T)$-GST at time $\tau \in (0,2\pi)$. Then one of the following occurs:
\begin{itemize}
\item[(a)]  $\kappa, \theta_1,\theta_2$ are all integers divisible by some  $D\ge 2$ and $\tau =2\ell\pi/D$ where $\ell$ is an integer, $0 < \ell < D$;
\item[(b)]  $(\nu,\kappa,\lambda,\mu)=(n,n-m,n-2m,n-m)$, $X$ is complete multipartite, 
the complement of a disjoint union of $| \theta_2 |= \frac{n}{m} >2$  complete graphs $K_m$,  $\tau=2\pi \ell/m$;
\item[(c)] $(\nu,\kappa,\lambda,\mu)=(2m,m,0,m)$,  $X$ is complete bipartite and   $\tau = \pi/D$ where $D$ is any positive divisor of $\kappa$;
\item[(d)]  $(\nu,\kappa,\lambda,\mu)=( 4 m+1, 2 m,  m-1, m)$ and $\tau = 2\pi B/\nu$ for some integer $B$ satisfying 
$\cos \left( \pi B \nu^{-1/2} \right) = -1/4m$.
\end{itemize}
\end{proposition}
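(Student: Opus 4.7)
The first step would be to reduce non-trivial GST to a scalar equation on the entries of $U(\tau)$. The formula $e_a^\top U(\tau)e_b = h_{\partial(a,b)}(\tau)/\nu$ derived just above the proposition shows that, on a strongly regular graph, $U(\tau)$ has only three distinct values, one for each distance class. Since non-trivial $(S,T)$-GST with $S\neq\emptyset$, $T\neq V(X)$ forces at least one entry of $U(\tau)$ to vanish (per the ``smallest non-trivial elements'' paragraph), at least one of $h_0(\tau)=0$, $h_1(\tau)=0$, $h_2(\tau)=0$ must hold. The proof then becomes a case analysis on which $h_\delta(\tau)$ vanishes, with three parallel analyses of the same geometric flavor.

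Each equation $h_\delta(\tau)=0$ has the form $\alpha_\delta e^{i\kappa\tau}+\beta_\delta e^{i\theta_1\tau}+\gamma_\delta e^{i\theta_2\tau}=0$ with explicit real coefficients depending only on $(\nu,\kappa,\lambda,\mu)$. Geometrically, a vanishing sum of three complex numbers with fixed magnitudes forces them to form a (possibly degenerate) triangle in $\cx$. If the triangle is degenerate, then $e^{i\kappa\tau},e^{i\theta_1\tau},e^{i\theta_2\tau}\in\{\pm 1\}$, so $\kappa\tau,\theta_1\tau,\theta_2\tau$ are all integer multiples of $\pi$; writing $\tau=2\pi\ell/D$ with $\gcd(\ell,D)=1$ then says $D$ divides $\kappa,\theta_1,\theta_2$, which is exactly case (a).

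For non-degenerate triangles, the side-length ratios of the three terms determine the interior angles via the law of cosines, which in turn pin down $(\theta_1-\kappa)\tau$ and $(\theta_2-\kappa)\tau$ modulo $2\pi$. Matching these constraints against the srg parameter families picks out the remaining items. Case (b) arises precisely when $\theta_1=0$ (complete multipartite): the middle term becomes constant and the resulting two-term equation forces $\tau$ to be an integer multiple of $2\pi/m$. Case (c) is the sub-case $K_{m,m}$, where the additional coincidence $\theta_2=-\kappa$ makes $h_1(\tau)$ simplify to $2i\sin(\kappa\tau)$, vanishing at $\tau=\pi/D$ for each divisor $D\mid\kappa$. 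Case (d) is the conference-graph case $f=g$, i.e.\ $(\nu-1)(\mu-\lambda)=2\kappa$, forcing $(\nu,\kappa,\lambda,\mu)=(4m+1,2m,m-1,m)$; the identities $\theta_1+\theta_2=-1$ and $\sqrt{\Delta}=\sqrt{\nu}$ let one rewrite $h_0(\tau)$ as $e^{i\kappa\tau}+4m\,e^{-i\tau/2}\cos(\sqrt{\nu}\,\tau/2)$, and separating magnitude and phase yields $\tau=2\pi B/\nu$ together with $\cos(\pi B/\sqrt{\nu})=-1/(4m)$.

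The main obstacle is the sign and degeneracy bookkeeping across the three equations $h_0,h_1,h_2$: the coefficients in $h_1$ and $h_2$ have mixed signs (since $\theta_1>0>\theta_2$ and $\kappa+1-\nu<0$), so the relevant triangle can degenerate in several distinct ways, and one must verify that each such degeneracy fits into one of items (a)--(d) without leaking new sporadic families. The conference-graph sub-case is also delicate because $\sqrt{\Delta}=\sqrt{\nu}$ is irrational, which rules out the collinear degeneracy and forces the simultaneous magnitude/phase conditions to be read as a single Diophantine-like constraint on $B$ and $m$.
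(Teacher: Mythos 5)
Your reduction to the scalar equations $h_\delta(\tau)=0$ and your identification of cases (b), (c), (d) follow the same architecture as the paper; in particular your rewriting of $h_0$ in the conference-graph case $f=g$ reproduces the condition $\cos\left(\pi B\nu^{-1/2}\right)=-1/4m$ essentially as the paper does. The gap is in your treatment of the degenerate triangle. First, degeneracy only forces the \emph{ratios} $e^{i(\theta_1-\kappa)\tau}$ and $e^{i(\theta_2-\kappa)\tau}$ to lie in $\{\pm1\}$, not the three phases themselves, and a collinear configuration with a mixed sign pattern is \emph{not} case (a). The paper sidesteps this for $h_1$ and $h_2$ by observing that their coefficients satisfy $1+f\theta_1/\kappa+g\theta_2/\kappa=0$ and $1+f(\theta_1+1)/(\kappa+1-\nu)+g(\theta_2+1)/(\kappa+1-\nu)=0$: a vanishing combination $az_0+bz_1+cz_2=0$ of unit-modulus numbers with real coefficients summing to zero forces either $z_0=z_1=z_2$ (so $U(\tau)$ acts as a scalar on $e_b$, case (a)) or a vanishing coefficient (so $\theta_1=0$, the complete multipartite and complete bipartite cases (b) and (c)). Your law-of-cosines step needs these coefficient identities to go through, and you never state them.

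Second, and more seriously, for $h_0(\tau)=0$ the coefficients $1,f,g$ are all positive with $1+f+g=\nu$, so the all-phases-equal configuration never gives zero; the triangle inequality leaves exactly $f=g$ (your case (d)) and $|f-g|=1$. The latter is a genuinely degenerate family with sign pattern $e^{i\kappa\tau}=e^{i\theta_1\tau}=-e^{i\theta_2\tau}$ (or the analogue with $\theta_1,\theta_2$ interchanged); it does not fall under (a)--(d), and the proposition does not follow unless it is excluded. The paper does this by invoking Juri\v{s}i\'{c}'s classification of strongly regular parameter sets with $|f-g|=1$, namely $(\nu,\kappa,\lambda,\mu)=(4m^2+4m+2,\,2m^2+m,\,m^2-1,\,m^2)$, followed by a parity argument on the integer eigenvalues showing that the required sign pattern cannot occur. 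You flag ``sign and degeneracy bookkeeping'' as the main obstacle, but your sketch would sweep this family into case (a), which is precisely the error the bookkeeping must prevent. (A smaller slip in the same direction: even granting $\kappa\tau,\theta_1\tau,\theta_2\tau\in\pi\ints$, writing $\tau=2\pi\ell/D$ in lowest terms yields only $D\mid 2\kappa$, $D\mid 2\theta_1$, $D\mid 2\theta_2$, not divisibility of the eigenvalues themselves.)
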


\begin{proof} 
We have done most of the work already. Part \textit{(c)} is handled in  Theorem \ref{Thm:bipartite} below.

First note that $1+f+g=\nu$; if $|f-g|>1$, then $h_0(t)$ is never zero, by the triangle inequality. Likewise, since
$$ 1 + f \frac{\theta_1}{\kappa} + g \frac{\theta_0}{\kappa} =  1+f \frac{\theta_1+1}{\kappa +1-\nu} + g \frac{\theta_2+1}{\kappa +1-\nu} = 0,$$
we can only have $h_2(t)=0$ when  $e^{i\kappa t} =e^{i\theta_1t}  = e^{i\theta_2t}$, in which case $U(t)e_b = e_b$. These are the only times at which 
$h_1(t)=0$ with the exception of complete multipartite graphs $X = \overline{| \theta_2 | K_m}$ (where $\theta_1 =0$) in which case we obtain 
$(\{b\},V(X) \setminus X(b))$-GST at times  $t=2\pi \ell/m$, maximal for $\ell$ odd. 

In the case $f=g$, it is well-known that the parameters $(\nu,\kappa,\lambda,\mu)$ are as given in case \textit{(d)} with $f=g=2\mu$ and 
$\theta_1,\theta_2 = \frac12 (-1 \pm \sqrt{\nu} )$. To obtain $h_0(\tau)=0$, we must have  
$\tau( \theta_1 - \kappa) + \tau(\theta_2-\kappa)$ an integer multiple of $2\pi$. Writing $\tau=-2\pi B/\nu$,
we need
$$ e^{i \tau( \theta_1 - \kappa) } + e^{i \tau( \theta_1 - \kappa) }  = -1 / \kappa $$
which gives us the condition $\cos \left( \pi B/ \sqrt{4\mu+1} \right) = -1/4\mu$ and no such examples are known.

The only case that remains to consider is 
$(\{b\},V(X)\setminus \{b\})$-GST in the case where $|f-g|=1$.  Aleksandar Juri\v{s}i\'{c} [pers.\ communication] showed that the strongly regular 
graph parameters with $|f-g|=1$ are precisely those in the family
$$ (\nu,\kappa,\lambda,\mu) = ( 4m^2+4m+2, \ 2m^2+m, \ m^2-1, \ m^2) $$
where $m$ is a positive integer. And now a simple parity argument shows $U(t)_{b,b} \neq 0$ for all real $t$. 
\end{proof}   

A regular graph $X$ is \emph{distance-regular} if the partition according to distance from any vertex is an equitable partition (and, hence, all these partitions 
admit the same quotient matrix $B$ \cite{GodShaw}).  Connected strongly regular graphs are precisely the distance-regular graphs of diameter $d=2$. 
The analysis above for strongly regular graphs extends to distance-regular graphs in the following way: if $X$ is a distance-regular graph of diameter
$d$ and $X$ admits $(S,T)$-GST at time $\tau$, then there exist $i_1,\ldots,i_k \in \{0,1,\ldots,d\}$, $k>0$,  for which 
$$T  \supseteq \left\{   v \in V(X) \middle| \left( \exists u\in S, \ 1\le j\le k\right) \left( \partial(u,v)=i_j  \right) \right\}$$
where $\partial(u,v)$ denotes path-length distance between $u$ and $v$ in $X$.

\subsection{Bijective group state transfer}
\label{Sec:S=T}

\paragraph{Block matrices} Let us consider the block structure of $U_X(\tau)$ when $X$ admits $(S,T)$-GST at time $\tau$. For convenience, assume the vertex
set $V(X)=\{1,\ldots,=n\}$ is ordered so that 
$$ S\setminus T = \{ 1,\ldots, n_2\}, \quad I=S\cap T= \{n_1+1,\ldots, n_2\}, \quad T\setminus S = \{n_2+1,\ldots n_3\} $$
where $1 \le n_1 \le n_2 \le n_3 \le n$. Partition the rows and columns accordingly and write 
$$ U(\tau)=U_X(\tau) = \hspace{.2in} 
\begin{tikzpicture}[baseline=(t),decoration=brace]
    \matrix (m) [matrix of math nodes,left delimiter=[,right delimiter={]}] {
    U_{11} & U_{12} & U_{13}& U_{14} \\ 
U_{21} & U_{22} & U_{23}& U_{24} \\ 
U_{31} & U_{32} & U_{33} & U_{34} \\ 
U_{41} & U_{42} & U_{43} & U_{44} \\
    };
    \draw[decorate,transform canvas={xshift=-1.5em},thick] (m-3-1.south west) -- node[left=2pt] (t) {$T$} (m-2-1.north west);
    \draw[decorate,transform canvas={yshift=0.5em},thick] (m-1-1.north west) -- node[above=2pt] {$S$} (m-1-2.north east);
\end{tikzpicture}
 = \left[ \begin{array}{c|c|c|c} 
 0 &  0 & U_{13}& U_{14} \\ \hline
U_{21} & U_{22} & U_{23}& U_{24} \\ \hline
U_{31} & U_{32} & U_{33} & U_{34} \\ \hline 
0  & 0 & U_{43} & U_{44}  \end{array} \right]$$
using the hypothesis of $(S,T)$-GST. Since $U(\tau)$ is a symmetric matrix, we have
$$ U(\tau)  = \left[ \begin{array}{c|c|c|c} 
 0 &  0 & U_{13}&  0 \\ \hline
0 & U_{22} & U_{23}& 0 \\ \hline
U_{31} & U_{32} & U_{33} & U_{34} \\ \hline 
0  & 0 & U_{43} & U_{44}  \end{array} \right]$$
with $U_{31}=U_{13}^\top$,  $U_{32}=U_{23}^\top$, 
$U_{43}=U_{34}^\top$,  and $U_{jj}$ symmetric for $j=2,3,4$.  

\begin{proposition}
\label{Prop:VminusTtoVminusS}
If $X$ has $(S,T)$-GST at time $\tau$, then $X$ has $(V(X)\setminus T, V(X)\setminus S)$-GST at time $\tau$. \qed
\end{proposition}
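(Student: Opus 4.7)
The plan is to observe that this proposition is essentially a restatement of part \textit{(g)} of Lemma \ref{Lem:basics}, and so it admits a one-line proof from the symmetry of $U(\tau)$. Concretely, because the adjacency matrix $A$ is real symmetric, each spectral idempotent $E_r$ is real symmetric, and thus
$$ U(\tau) = \sum_{r=0}^d e^{i\tau\theta_r} E_r $$
is a (complex) symmetric matrix. The hypothesis of $(S,T)$-GST at time $\tau$ says exactly that the submatrix of $U(\tau)$ with rows indexed by $V(X)\setminus T$ and columns indexed by $S$ is the zero matrix. Transposing that submatrix, and using $U(\tau)^\top = U(\tau)$, gives that the submatrix with rows indexed by $S$ and columns indexed by $V(X)\setminus T$ is also zero. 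This is precisely the assertion $U(\tau)\sub{V(X)\setminus T}\subseteq \sub{V(X)\setminus S}$, i.e., $(V(X)\setminus T, V(X)\setminus S)$-GST at time $\tau$.

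Alternatively, I would just read the conclusion off of the block decomposition of $U(\tau)$ displayed immediately above the proposition. With the ordering of $V(X)$ fixed there, $S$ is the union of blocks $1$ and $2$ and $V(X)\setminus T$ is the union of blocks $1$ and $4$. The required zero submatrix consists of the four blocks at positions $(1,1)$, $(1,4)$, $(2,1)$, $(2,4)$ of the partitioned $U(\tau)$; consulting the matrix, all four of these are already shown to vanish. Hence the submatrix of $U(\tau)$ with $S$-rows and $(V(X)\setminus T)$-columns is zero, which is exactly the claim.

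There is no genuine obstacle here: the block picture that precedes the statement was set up precisely so that the complementary GST is visible by inspection, and the general symmetry argument makes it a corollary of Lemma \ref{Lem:basics}\textit{(g)}. I would write the proof as a single sentence invoking symmetry, and perhaps include a pointer back to the block display to help the reader visualize why the statement is natural in the bijective setting under discussion.
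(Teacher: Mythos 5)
Your proof is correct and matches the paper's intent: Proposition \ref{Prop:VminusTtoVminusS} is indeed just one direction of Lemma \ref{Lem:basics}\textit{(g)}, and the paper leaves it as an immediate consequence of the symmetry of $U(\tau)$ as exhibited in the block decomposition directly above the statement. Both your symmetry argument and your reading of the blocks $U_{11},U_{14},U_{21},U_{24}$ are exactly what the paper has in mind.
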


The Frobenius norm of $U_{31}$ is $n_1$, so the sum of the squared moduli of the 
entries of $U_{13}$ is also $n_1$, giving another proof that $|S|\le |T|$. If $|S|=|T|$, then $U_{j3}=0$ is forced for $j=2,3,4$. So, for $|S|=|T|$, we have
$$ U(\tau)  = \left[ \begin{array}{c|c|c|c} 
 0 &  0 & U_{13}&  0 \\ \hline
0 & U_{22} & 0& 0 \\ \hline
U_{31} & 0 &0 & 0 \\ \hline 
0  & 0 & 0& U_{44}  \end{array} \right]$$ 
with $U_{22}$ and $U_{44}$ symmetric unitary matrices. This gives us the following result\footnote{Godsil [personal communication] studied the case
of a periodic subset (where $S=T$), showing not only that $V(X)\setminus S$ is also periodic but proving that $Q_S U(\tau) Q_S$ belongs to 
the center of the algebra $Q_S \cA Q_S$ where $Q_S = \sum_{a \in S} e_a e_a^\top$ is the diagonal matrix projecting $\cx^n$ orthogonally onto $\sub{S}$.}

\begin{theorem}
\label{Thm:equal_card}
Assume that graph $X$ has $(S,T)$-GST at time $\tau$ and $|S|=|T|$. Write $I=S\cap T$. Then
\begin{itemize}
\item[(a)] $X$ has $(T,S)$-GST at $\tau$;
\item[(b)] $X$ has $(S\setminus I,T\setminus I)$-GST at $\tau$;
\item[(c)] $X$ has $(T\setminus I,S\setminus I)$-GST at $\tau$;
\item[(d)] $I$ is periodic at $\tau$; 
\item[(e)] both $S$ and $T$ are periodic at time $2\tau$;
\item[(f)]  the set $R=V(X) \setminus S\cup T$ is periodic at $\tau$.  \qed
\end{itemize}
\end{theorem}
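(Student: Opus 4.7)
The plan is to read all six statements directly off the block decomposition of $U(\tau)$ that has already been established immediately before the theorem statement. Recall that the ordering of $V(X)$ partitions the matrix into four row/column blocks indexed, respectively, by $S\setminus T$, $I=S\cap T$, $T\setminus S$, and $R=V(X)\setminus(S\cup T)$, and the hypothesis $|S|=|T|$ forces the very clean form
$$ U(\tau)  = \left[ \begin{array}{c|c|c|c}
 0 &  0 & U_{13}&  0 \\ \hline
0 & U_{22} & 0& 0 \\ \hline
U_{31} & 0 &0 & 0 \\ \hline
0  & 0 & 0& U_{44}  \end{array} \right], $$
with $U_{22}$ and $U_{44}$ symmetric unitary.

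For parts \textit{(a)}--\textit{(d)} and \textit{(f)}, I would simply inspect the columns of this block form. The columns indexed by $T=I\cup(T\setminus S)$ are nonzero only in rows belonging to $I$ and $S\setminus T$, respectively, so $U(\tau)\sub{T}\subseteq\sub{S}$, giving \textit{(a)}. The column block over $S\setminus I$ is supported only on $T\setminus I$, giving \textit{(b)}, and symmetrically for \textit{(c)}. The column block over $I$ is supported only on $I$ (it is the symmetric unitary $U_{22}$), giving \textit{(d)}, and likewise $U_{44}$ witnesses that $R$ is periodic at $\tau$ for \textit{(f)}.

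For part \textit{(e)}, I would invoke the transitivity statement in Lemma~\ref{Lem:basics}\textit{(f)}: combining $(S,T)$-GST at time $\tau$ (the hypothesis) with $(T,S)$-GST at time $\tau$ (just proved in \textit{(a)}) yields $(S,S)$-GST at time $2\tau$, i.e.\ $S$ is periodic at $2\tau$. The same argument with the roles reversed gives the periodicity of $T$ at $2\tau$.

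There is essentially no obstacle here, since the heavy lifting was done in the preceding paragraph where symmetry of $U(\tau)$ together with $|S|=|T|$ was used to force the zero pattern of the block matrix. The only stylistic care needed is to keep the four index blocks and the inclusion/exclusion of $I$ straight when reading off each statement, and to be explicit that unitarity and symmetry of $U_{22}$ and $U_{44}$ are automatic consequences of $U(\tau)$ being symmetric unitary restricted to an invariant subspace (which also, in passing, reconfirms \textit{(d)} and \textit{(f)}).
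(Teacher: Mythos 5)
Your proposal is correct and is essentially the paper's own argument: the theorem is stated with an immediate \qed precisely because all six parts are meant to be read off the zero pattern of the block form of $U(\tau)$ derived just beforehand (with \textit{(e)} following from \textit{(a)} and Lemma~\ref{Lem:basics}\textit{(f)}), exactly as you do.
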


\begin{corollary} 
\label{Cor:assume_disj}
If $X$ is a graph with $(S,T)$-GST at time $\tau$ such that $|S|=|T|$, then there exist disjoint  $S',T' \subseteq V(X)$ for which $|S'|=|T'|$ and
$X$ has $(S',T')$-GST at time $\tau$. \qed
\end{corollary}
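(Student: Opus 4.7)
The plan is to invoke Theorem~\ref{Thm:equal_card} directly. Given $(S,T)$-GST at time $\tau$ with $|S|=|T|$, set $I = S \cap T$ and define $S' = S \setminus I$ and $T' = T \setminus I$. By construction $S'$ and $T'$ are disjoint subsets of $V(X)$, so the only things left to verify are that $(S',T')$-GST occurs at time $\tau$ and that $|S'|=|T'|$.

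The state-transfer assertion is precisely the content of Theorem~\ref{Thm:equal_card}(b), which was derived from the block decomposition of $U(\tau)$ in the bijective case: when $|S|=|T|$, the off-diagonal blocks connecting the ``interior'' index set $I$ to either $S \setminus I$ or $T\setminus I$ are forced to vanish, and hence the restriction of $U(\tau)$ carries $\sub{S \setminus I}$ into $\sub{T \setminus I}$. So I would simply cite that part of the theorem rather than re-deriving the block structure.

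For the cardinality, since $I \subseteq S$ and $I \subseteq T$ we have $|S'| = |S| - |I|$ and $|T'| = |T| - |I|$, and the hypothesis $|S|=|T|$ immediately yields $|S'|=|T'|$. Note the conclusion also holds vacuously in the edge case $S = T$, where $I = S$ and one obtains the trivial disjoint pair $(\emptyset,\emptyset)$, which is in $\ST(X,\tau)$ by Lemma~\ref{Lem:basics}(b).

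There is no real obstacle here; the corollary is essentially a restatement of Theorem~\ref{Thm:equal_card}(b) packaged with the arithmetic observation $|S \setminus I| = |T \setminus I|$. The only thing to double-check is that the proof of Theorem~\ref{Thm:equal_card}(b) does not itself implicitly assume $S$ and $T$ are already disjoint, but inspection of the block argument above shows it does not.
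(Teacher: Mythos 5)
Your proposal is correct and matches the paper's intended argument: the corollary is stated with an immediate \qed precisely because it follows by taking $S'=S\setminus I$, $T'=T\setminus I$ with $I=S\cap T$, applying Theorem~\ref{Thm:equal_card}(b), and observing $|S'|=|S|-|I|=|T|-|I|=|T'|$. Nothing further is needed.
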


We can now prove that group state transfer is  \emph{monogamous}. 

\begin{theorem}
\label{Thm:monogamous}
If $X$ admits $(S,R)$-GST at time $\sigma$ and $(S,T)$-GST at time $\tau$ with $|R|=|S|=|T|$, then  $R\in \{S,T\}$.
\end{theorem}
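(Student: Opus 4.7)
The plan is to establish monogamy by showing that the set of bijective-GST targets of $S$ contains at most two distinct subsets, with $S$ itself always being one of them; since $R$ and $T$ both lie in this set, the conclusion will follow. I would first dispose of the trivial cases $|S| \in \{0, n\}$, where the only subset of that cardinality is $S$ itself, forcing $R = T = S$; henceforth assume $0 < |S| < n$.

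Introduce the periodicity subgroup $G_S = \{t \in \re : U(t) \langle S \rangle = \langle S \rangle\}$. By parts (f) and (h) of Lemma~\ref{Lem:basics}, $G_S$ is closed under addition and negation, hence an additive subgroup of $\re$. Since $(S,S)$ is non-trivial for $0 < |S| < n$, Lemma~\ref{Lem:usuallytrivial} forces $G_S$ to be discrete, so $G_S = c \ints$ for some $c \ge 0$. Analogously, for each subset $U \subseteq V(X)$ with $|U| = |S|$, let $G_{S,U} = \{t : U(t) \langle S \rangle = \langle U \rangle\}$. Using Theorem~\ref{Thm:equal_card}(a) (reversibility of bijective GST) together with Lemma~\ref{Lem:basics}(f) (composition), the difference of any two elements of $G_{S,U}$ lies in $G_S$, and any sum of an element of $G_{S,U}$ with an element of $G_S$ again lies in $G_{S,U}$; hence each non-empty $G_{S,U}$ is a coset of $G_S$. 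Moreover, by Theorem~\ref{Thm:equal_card}(e), every $t \in G_{S,U}$ satisfies $2t \in G_S$, so $G_{S,U} \subseteq \tfrac{c}{2} \ints$.

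The punchline is a counting argument on cosets: since $\tfrac{c}{2}\ints / c \ints$ has order exactly two, at most two distinct non-empty cosets of $c\ints$ fit inside $\tfrac{c}{2} \ints$, namely $c \ints$ itself (corresponding to $U = S$) and possibly one other coset $\tfrac{c}{2} + c \ints$ (corresponding to a unique $U^* \ne S$). Hence the target set $\{U : G_{S,U} \ne \emptyset\}$ has size at most two and contains $S$; since both $R$ and $T$ lie in this set, $R$ must coincide with $S$ or with $T$, giving $R \in \{S, T\}$.

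The main obstacle is the rigorous justification of discreteness of $G_S$, since Lemma~\ref{Lem:usuallytrivial} is stated under the assumption that $X$ is connected. For disconnected graphs, the argument would need to be carried out componentwise: $S$ splits according to its intersections with the connected components of $X$, and $U(t)$ is block diagonal with respect to this partition, so bijective GST reduces to the connected case in each block.
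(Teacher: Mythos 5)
Your proof is correct in substance and takes a genuinely different route from the paper's. The paper first uses a compactness argument (via Lemma~\ref{Lem:usuallytrivial}) to show that $\sigma$ and $\tau$ are commensurable, and then runs a parity argument on an integer relation $m\sigma=n\tau$. You instead organize everything around the period group $G_S=\{t\in\re : U(t)\sub{S}=\sub{S}\}$: Lemma~\ref{Lem:usuallytrivial} forces it to be discrete, hence $G_S=c\ints$; each bijective target $U$ of $S$ contributes a full coset $G_{S,U}$ of $G_S$ (via Theorem~\ref{Thm:equal_card}(a) and Lemma~\ref{Lem:basics}(f)), and Theorem~\ref{Thm:equal_card}(e) pins every such coset inside $\tfrac{c}{2}\ints$, which contains only two cosets of $c\ints$. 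This is cleaner and buys strictly more than the paper's argument: it shows that the full set of bijective targets of $S$ is either $\{S\}$ or $\{S,U^*\}$, and moreover identifies exactly which times realize which target ($G_{S,S}=c\ints$, $G_{S,U^*}=\tfrac{c}{2}+c\ints$). Your caveat about connectedness is apt but not a point against you: the paper's proof depends on Lemma~\ref{Lem:usuallytrivial}, hence on connectivity, in exactly the same hidden way. (For completeness you should also say a word about the degenerate case $c=0$, where $\tfrac{c}{2}\ints=\{0\}$ forces $S$ to be its own unique target; your coset count does cover it.)

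One genuine wrinkle: your final inference --- ``$R$ and $T$ both lie in a two-element set containing $S$, hence $R\in\{S,T\}$'' --- fails in the configuration $T=S$, $R=U^*\neq S$. However, this reflects an imprecision in the theorem statement rather than a defect of your argument: the literal statement is false in exactly that configuration. In the $d$-cube, $S=\{a\}$ admits $(S,\{\bar a\})$-GST at $\sigma=\pi/2$ and $(S,S)$-GST at $\tau=\pi$, yet $\{\bar a\}\notin\{S,S\}$. The paper's own proof glosses over the same case with ``without loss of generality, $m$ is odd,'' which is unavailable when $\sigma/\tau$ in lowest terms has even denominator (as in the cube example, where $\sigma/\tau=1/2$). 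What you actually establish is the intended monogamy statement announced in the introduction --- aside from $S$ itself, $S$ has at most one bijective target --- and your coset structure makes the exceptional case transparent rather than hiding it.
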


\begin{proof}
We first prove that $\sigma$ and $\tau$ must be commensurable  real numbers.  If not, then the set of remainders $\rho_n = n \tau \pmod{\sigma}$ (satisfying
$0\le \rho_n < \sigma$ and $(n\tau - \rho_n)/\sigma \in \ints$) must be infinite.  Re-index to a subsequence of $\ints^+$ if necessary so that,
with $m(n)= (n\tau - \rho_n)/\sigma$, we have $\tau_n = n\tau - m(n) \sigma$ converging to some point $\tau^* \in [0,\sigma)$. Applying Lemma \ref{Lem:basics}\textit{(f)}
and Theorem \ref{Thm:equal_card}\textit{(a,e)}, we find infinitely many distinct times at which $(K,L)$-GST occurs for some $K,L \in \{R,S,T\}$, contradicting
Lemma \ref{Lem:usuallytrivial}. So there must be some $n$ and $n'$ for which $\rho_n = \rho_{n'}$ and we have $\sigma = (n-n')\tau/( m(n) - m(n'))$.

Since $\sigma$ and $\tau$ are commensurable, there exist integers $m,n$ such that $m\sigma=n\tau$ and, without loss of generality, $m$ is odd. At time
$m\sigma$, $X$ admits $(S,R)$-GST and either $(S,S)$-GST or $(S,T)$-GST. Thus $R=S$ or $R=T$.
\end{proof}

\section{Eigenspace geometry}   
\label{Sec:eig}

Let $X$ be a graph with adjacency matrix $A$ and spectral decomposition $A = \sum_{r=0}^d \theta_r E_r$ with $\theta_0,\ldots,\theta_d$ distinct.
The \emph{adjacency algebra} 
$\cA = \spn_\cx \left\{ I, A,A^2, \ldots \right\} = \left\{ \sum_{r=0}^d \alpha_r A^r \ \middle| \ \alpha_0,\ldots, \alpha_d \in \cx\right\}$ of $X$ 
contains $E_0,\ldots,E_d$ as well as $U_X(t)$ for each $t\in \re$. This is properly contained in the  \emph{centralizer algebra}
$ \cC(A) = \left\{ M \in \cx^{n \times n} \middle| MA=AM \right\}$ of $A$. The permutation matrices in $\cC(A)$  are  simply
those representing elements of the automorphism group, $\{ P_\sigma \mid \sigma \in \Aut(X) \}$.

\paragraph{The action of $U(\tau)$ on an eigenspace} 
Suppose $X$ admits $(S,T)$-GST at time $\tau$ with $|S|=|T|$. As in the previous section, write $U(\tau)$ in block form and partition $E_r$ into 
blocks in the same way:

$$ U = U(\tau)  = \left[ \begin{array}{c|c|c|c} 
 0 &  0 & U_{13}&  0 \\ \hline
0 & U_{22} & 0& 0 \\ \hline
U_{31} & 0 &0 & 0 \\ \hline 
0  & 0 & 0& U_{44}  \end{array} \right], \qquad 
E = E_r =
  \left[ \begin{array}{c|c|c|c} 
 E_{11} &  E_{12} & E_{13}& E_{14} \\ \hline
E_{21} & E_{22} & E_{23}& E_{24} \\ \hline
E_{31} & E_{32} & E_{33} & E_{34} \\ \hline 
E_{41}  & E_{42} & E_{43} & E_{44}  \end{array} \right].$$
Abbreviating $e^{i\tau \theta_r } = \lambda_r$,  the equations $E U = U E = \lambda_r E$ give us a system of equations relating 
the various blocks
\begin{center}
\begin{tabular}{rm{.01in}cm{.01in}ccrm{.01in}cm{.01in}c}
$E_{13} U_{31}$ & $=$ & $U_{13} E_{31}$ & $=$ & $\lambda_r E_{11}$  &  \hspace{0.4in} & $E_{11} U_{13}$ & $=$ & $U_{13} E_{33}$ & $=$ & $\lambda_r E_{13}$  \\
$E_{23} U_{31}$ & $=$ & $U_{22} E_{21}$ & $=$ & $\lambda_r E_{21}$  &  \hspace{0.4in} & $E_{21} U_{13}$ & $=$ & $U_{22} E_{23}$ & $=$ & $\lambda_r E_{23}$  \\
$E_{33} U_{31}$ & $=$ & $U_{31} E_{11}$ & $=$ & $\lambda_r E_{31}$  &  \hspace{0.4in} & $E_{31} U_{13}$ & $=$ & $U_{31} E_{13}$ & $=$ & $\lambda_r E_{33}$  \\
$E_{43} U_{31}$ & $=$ & $U_{44} E_{41}$ & $=$ & $\lambda_r E_{41}$  &  \hspace{0.4in} & $E_{41} U_{13}$ & $=$ & $U_{44} E_{43}$ & $=$ & $\lambda_r E_{43}$  \\[.1in]
$E_{12} U_{22}$ & $=$ & $U_{13} E_{32}$ & $=$ & $\lambda_r E_{12}$  &  \hspace{0.4in} & $E_{14} U_{44}$ & $=$ & $U_{13} E_{34}$ & $=$ & $\lambda_r E_{14}$  \\ 
$E_{22} U_{22}$ & $=$ & $U_{22} E_{22}$ & $=$ & $\lambda_r E_{22}$  &  \hspace{0.4in} & $E_{24} U_{44}$ & $=$ & $U_{22} E_{24}$ & $=$ & $\lambda_r E_{24}$  \\ 
$E_{32} U_{22}$ & $=$ & $U_{31} E_{12}$ & $=$ & $\lambda_r E_{32}$  &  \hspace{0.4in} & $E_{34} U_{44}$ & $=$ & $U_{31} E_{14}$ & $=$ & $\lambda_r E_{34}$  \\ 
$E_{42} U_{22}$ & $=$ & $U_{44} E_{42}$ & $=$ & $\lambda_r E_{42}$  &  \hspace{0.4in} & $E_{44} U_{44}$ & $=$ & $U_{44} E_{44}$ & $=$ & $\lambda_r E_{44}$  
\end{tabular}
\end{center}
where we know that both $E$ and $U$ are symmetric and $U$ is unitary. So $U_{13},U_{22},U_{31},U_{44}$ are all unitary. This shows that $S\setminus I$ and $T\setminus I$  are ``parallel'' subsets in the following sense.

\begin{lemma}
\label{Lem:parallel}
Let $X$ be a graph with adjacency matrix $A$ having spectral decomposition $A=\sum_{r=0}^d \theta_r E_r$ with $\theta_0,\ldots,\theta_d$ distinct.  Let
$S,T\subseteq V(X)$ with $|S|=|T|$ having orthogonal projections $Q_S = \sum_{a\in S} e_a e_a^\top$ and $Q_T = \sum_{a\in T} e_a e_a^\top$ onto
$\sub{S}$ and $\sub{T}$, respectively. If $X$ admits $(S,T)$-GST, then, for each $r=0,\ldots,d$, there exists a unitary matrix $N_r$ such that 
$E_r Q_S N_r = E_r Q_T$. In particular $\spn \{ E_r e_a \mid a\in S\} =  \spn \{ E_r e_a \mid a\in T\}$.
\end{lemma}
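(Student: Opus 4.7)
The plan is to construct $N_r$ explicitly in block form, leveraging the sixteen intertwining equations already assembled above the lemma. Partition rows and columns of $U(\tau)$ and $E_r$ conformally into the four blocks $S\setminus T$, $I=S\cap T$, $T\setminus S$, $V(X)\setminus(S\cup T)$, write $\lambda_r := e^{i\tau\theta_r}$ (so $|\lambda_r|=1$), and isolate the subfamily of relations coming from the third block-column of $E_r U(\tau)=\lambda_r E_r$:
$$ E_{j1} U_{13} = \lambda_r E_{j3} \qquad (j=1,2,3,4). $$
Because $|S|=|T|$ forces $|S\setminus T|=|T\setminus S|$, the block $U_{13}$ is a square unitary (as established just above), so $W := \lambda_r^{-1} U_{13}$ is a single unitary matrix realising the intertwining $E_{j1} W = E_{j3}$ uniformly in $j$.

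Next I would set
$$
N_r = \left[\begin{array}{c|c|c|c}
0 & 0 & W & 0 \\ \hline
0 & I & 0 & 0 \\ \hline
W^* & 0 & 0 & 0 \\ \hline
0 & 0 & 0 & I
\end{array}\right],
$$
conformally block-partitioned with $E_r$. By inspection $N_r$ is Hermitian, and a direct block multiplication gives $N_r^2=I$; hence $N_r$ is unitary. The identity $E_r Q_S N_r = E_r Q_T$ then follows by a block computation: $Q_S$ annihilates the bottom two block-rows of $N_r$, so $Q_S N_r$ has only two nonzero block-columns, namely $[0;I;0;0]$ in position~2 and $[W;0;0;0]$ in position~3. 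Left-multiplying by $E_r$ reproduces the blocks $E_{j2}$ in column~2, and via the intertwining $E_{j1}W=E_{j3}$ reproduces the blocks $E_{j3}$ in column~3; this is exactly $E_r Q_T$.

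The ``in particular'' statement is then an immediate corollary: $E_r Q_S N_r = E_r Q_T$ forces $\spn\{E_r e_a : a\in T\} \subseteq \spn\{E_r e_a : a\in S\}$, and invertibility of $N_r$ (or symmetrically invoking Theorem~\ref{Thm:equal_card}(a) and repeating the construction for $(T,S)$-GST) yields the reverse inclusion. I do not anticipate a significant obstacle: the block relations listed above the lemma already do all of the conceptual work, and the only subtleties worth flagging are that $|\lambda_r|=1$ is what keeps $W$ unitary and that $|S|=|T|$ is what keeps $U_{13}$ square.
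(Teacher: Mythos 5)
Your proof is correct and takes essentially the same approach as the paper: the paper sets $M=\lambda_r^{-1}U_{13}$ and defines $N_r=N'P$ with $N'=\diag(M,I,M^{-1},I)$ and $P$ the permutation swapping block positions $1$ and $3$, and that product is exactly your matrix (since $M^{-1}=M^*$ for unitary $M$). The intertwining relations $E_{j1}U_{13}=\lambda_r E_{j3}$ and the unitarity of $U_{13}$ are used identically in both arguments.
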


\begin{proof}
Write $M = \lambda_r^{-1} U_{13}$, so that
$$   \left[ \begin{array}{c}   E_{11}  \\  E_{21} \\ E_{31} \\ E_{41}  \end{array} \right] M = 
    \left[ \begin{array}{c}   E_{13}  \\  E_{23} \\ E_{33} \\ E_{43}  \end{array} \right]  $$
from above. Choose 
$$N' = \left[ \begin{array}{c|c|l|c} 
M& 0 &  0 &  0 \\ \hline
0 & I & 0& 0 \\ \hline
0 & 0 & M^{-1} &  0 \\ \hline 
0  & 0 & 0& I  \end{array} \right] , \quad
P = 
 \left[ \begin{array}{c|c|c|c} 
0&  0 & I &  0 \\ \hline
0 & I  & 0& 0 \\ \hline
I  & 0 & 0&  0 \\ \hline 
0 & 0 & 0& I  \end{array} \right] , 
 \quad \text{so that} \quad
E_r N' =    \left[ \begin{array}{c|c|c|c} 
 E_{13} &  E_{12} & E_{11}& E_{14} \\ \hline
E_{23} & E_{22} & E_{21}& E_{24} \\ \hline
E_{33} & E_{32} & E_{31} & E_{34} \\ \hline 
E_{43}  & E_{42} & E_{41} & E_{44}  \end{array} \right]$$
and $N_r = N'P$ satisfies $E_r Q_S N_r = E_r Q_T$ as desired.
\end{proof}

\section{Discrete topology} 
\label{Sec:top}

\paragraph{Three maps on subsets of vertices} In Section \ref{Sec:sF}, we introduced a time-dependent function $\sF: \cP \rightarrow \cP$ given by 
$$ \sF(S, t) = \left\{ a\in V(X) \ \middle| \ (\exists b\in S) ( e_a^\top U(t) e_b \neq 0) \right\}. $$
Mirroring this, consider 
$$ \sI(S, t) = \left\{ a\in V(X) \ \middle| \   e_a  \in U(t) \sub{S}  \right\}. $$
Immediately, we see that the following are equivalent for $S,T\subseteq V(X)$:
\begin{itemize}
\item $X$ has $(S,T)$-GST at time $\tau$;
\item $\sF(S,\tau) \subseteq T$;
\item $S \subseteq \sI(T,-\tau)$.
\end{itemize}
Now define the $t$-\emph{closure} of $S\subseteq V(X)$ as
$$ \Cl_t(S) = \sI \left( \sF(S,t), -t \right).$$

\begin{theorem}
\label{Thm:topology}
Let $X$ be a graph and let $S,T\subseteq V(X)$. Then, for any $t\in \re$,
\begin{itemize}
\item[(a)] $S \subseteq T$ implies $\sF(S,t) \subseteq \sF(T,t)$;
\item[(b)] $\sF( S \cap T, t)  \subseteq  \sF(S,t) \cap \sF(T,t)$;
\item[(c)] $\sF( S \cup T, t) = \sF(S,t) \cup \sF(T,t)$;
\item[(d)] $S \subseteq T$ implies $\sI(S,t) \subseteq \sI(T,t)$;
\item[(e)] $\sI( S \cap T, t)  = \sI(S,t) \cap \sI(T,t)$;
\item[(f)] $\sI( S \cup T, t) \supseteq  \sI(S,t) \cup \sI(T,t)$;
\item[(g)] $S \subseteq \Cl_t(S)$;
\item[(h)]  $\Cl_t( S \cap T) \subseteq  \Cl_t(S) \cap \Cl_t(T)$;
\item[(i)]  $\Cl_t(S) \cup \Cl_t(T)\subseteq   \Cl_t( S \cup T) $.
\end{itemize}
\end{theorem}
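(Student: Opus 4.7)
The proof is largely a matter of unpacking definitions, exploiting the injectivity of $U(t)$, and chaining the earlier parts of the theorem to establish the closure properties at the end. I would organize the argument in three stages, handling first the properties of $\sF$, then the properties of $\sI$, then the properties of $\Cl_t$.

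For (a)--(c), I would work directly from the definition of $\sF$. Part (a) is immediate: if $a$ witnesses $(\exists b \in S)(e_a^\top U(t) e_b \ne 0)$ and $S \subseteq T$, the same $b$ witnesses membership in $\sF(T,t)$. Part (b) follows by applying (a) to the two inclusions $S \cap T \subseteq S$ and $S \cap T \subseteq T$. Part (c) is the only one among these where equality holds; the $\subseteq$ direction uses that an existential witness $b \in S \cup T$ must lie in $S$ or in $T$, and the $\supseteq$ direction is (a).

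For (d)--(f), the key tool is that $U(t)$ is unitary, hence injective. Part (d) follows because $S \subseteq T$ gives $\sub{S} \subseteq \sub{T}$, whence $U(t)\sub{S} \subseteq U(t)\sub{T}$, so any $e_a$ in the former lies in the latter. Part (f) is (d) applied to both $S$ and $T$ in $S \cup T$. The substantive step is the $\supseteq$ direction of (e): if $e_a \in U(t)\sub{S} \cap U(t)\sub{T}$, then by injectivity the unique preimage $\psi = U(t)^{-1} e_a$ lies in both $\sub{S}$ and $\sub{T}$, hence in $\sub{S} \cap \sub{T} = \sub{S \cap T}$, so $e_a \in U(t) \sub{S \cap T}$. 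The opposite inclusion is (d). This is the only place where injectivity is essential, so it is the main ``obstacle'', though it is very short.

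For (g)--(i), I would use the three-way equivalence displayed just before the theorem, namely that $X$ has $(S,T)$-GST at time $\tau$ iff $\sF(S,\tau) \subseteq T$ iff $S \subseteq \sI(T,-\tau)$. Part (g) then amounts to the tautology that $X$ has $(S,\sF(S,t))$-GST at time $t$: taking $T = \sF(S,t)$ in the equivalence, the statement $\sF(S,t) \subseteq \sF(S,t)$ translates to $S \subseteq \sI(\sF(S,t), -t) = \Cl_t(S)$. For (h), I would apply (b) to get $\sF(S \cap T, t) \subseteq \sF(S,t) \cap \sF(T,t)$, then apply $\sI(\cdot,-t)$ using monotonicity (d), and finally use the exact equality (e) to split the intersection, yielding $\Cl_t(S\cap T) \subseteq \sI(\sF(S,t),-t) \cap \sI(\sF(T,t),-t) = \Cl_t(S) \cap \Cl_t(T)$. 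Part (i) is the dual computation: apply (c) to rewrite $\sF(S\cup T, t) = \sF(S,t) \cup \sF(T,t)$, then apply $\sI(\cdot,-t)$ and use (f) in the form $\sI(\sF(S,t),-t) \cup \sI(\sF(T,t),-t) \subseteq \sI(\sF(S,t)\cup \sF(T,t),-t) = \Cl_t(S \cup T)$. No part of the argument requires new ideas beyond the unitarity of $U(t)$ and elementary set theory.
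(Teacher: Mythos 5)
Your proposal is correct and follows essentially the same route as the paper: parts (a)--(c) and (d),(f) by unwinding the definitions of $\sF$ and $\sI$, the reverse inclusion in (e) via injectivity of the unitary $U(t)$, and (g)--(i) via the stated equivalence between $(S,T)$-GST, $\sF(S,t)\subseteq T$, and $S\subseteq \sI(T,-t)$. The paper only writes out (b), (f), and (g) and calls the rest elementary; your chaining of (b)--(f) to obtain (h) and (i) supplies exactly the omitted details.
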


\begin{proof}
The proofs are all elementary. We include proofs of \textit{(b)},  \textit{(f)}, and  \textit{(g)} only. First, we prove part  \textit{(b)}: if $u\in  \sF( S \cap T, t)$ then
there is some $v\in S\cap T$ with $e_u^\top U(t) e_v \neq 0$. Since $v\in S$, $u \in \sF(S,t)$ and since $v\in T$, $u \in \sF(T,t)$. For  \textit{(f)}, take $u\in
 \sI(S,t)$ so that $e_u \in U(t) \sub{S}$, giving $e_u \in U(t) \sub{S}+ U(t) \sub{T}= U(t) \sub{S\cup T}$ and repeat this reasoning with the roles of $S$ and $T$ swapped.
 To prove   \textit{(g)}, take $u\in S$ and set $\vp=U(t)e_u \in  U(t) \sub{S}$. Then $\supp(\vp) \subseteq \sF(S,t)$ giving $\vp \in \sub{ \sF(S,t)}$ which, in
 turn, implies $e_u \in \sI( \sF(S,t),-t)=\Cl_t(S)$. 
\end{proof}

\paragraph{Discrete topology} Let us say that $S\subseteq V(X)$ is \emph{closed at time} $t$  (or simply $t$-\emph{closed}) 
if $S = \Cl_t(S)$ and  \emph{open at time} $t$ if $V(X) \setminus S$ is closed at time $t$. 
Note that any set $S$ for which $X$ admits bijective group state transfer at time $t$ (i.e.,  $|\sF(S,t)|=|S|$) 
is $t$-closed. We do not know if every $t$-closed subset has this property.  If so, then $t$-closed and $t$-open are synonymic. From Lemma \ref{Lem:usuallytrivial},
we know that, for most $t$, we obtain only the indiscrete topology $\{ \emptyset, V(X)\}$  and Example \ref{Ex:cube} illustrates a case where the discrete topology
arises: at time $t=\pi/2$, every vertex subset of the $d$-cube is both $t$-open and $t$-closed.

\begin{corollary}
\label{Cor:topology}
Let $X$ be a graph. At each time $t$, the open sets at time $t$ form a topology on $V(X)$. \qed
\end{corollary}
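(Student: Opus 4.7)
The plan is to verify the three topology axioms on the finite set $V(X)$ by working with $t$-closed sets as the closed sets of the candidate topology. Axiom (i)---that $\emptyset$ and $V(X)$ are $t$-closed---follows by direct computation: $\Cl_t(\emptyset) = \sI(\sF(\emptyset,t), -t) = \sI(\emptyset, -t) = \emptyset$, and since $U(t)$ is invertible every row has a nonzero entry, giving $\sF(V(X),t) = V(X)$ and hence $\sI(V(X),-t) = V(X)$.

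Axiom (ii)---closure under arbitrary intersection---follows from the monotonicity of $\Cl_t$ (a consequence of Theorem \ref{Thm:topology}\textit{(a)} and \textit{(d)}) together with the extensivity \textit{(g)}. For any family $\{S_i\}_{i \in I}$ of $t$-closed sets, monotonicity yields $\Cl_t\bigl(\bigcap_i S_i\bigr) \subseteq \bigcap_i \Cl_t(S_i) = \bigcap_i S_i$, while \textit{(g)} gives the reverse inclusion. Dually, arbitrary unions of $t$-open sets are $t$-open.

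Axiom (iii)---closure under finite union---is where the real work lies. Theorem \ref{Thm:topology}\textit{(g)} and \textit{(i)} together provide only $S \cup T \subseteq \Cl_t(S \cup T)$, and the reverse containment does not follow from the \v{C}ech-type axioms already recorded. My plan is to exploit the Galois connection hidden in the definitions: the equivalence of $\sF(S,t) \subseteq T$ with $S \subseteq \sI(T,-t)$ (both being reformulations of $U(t)\sub{S}\subseteq\sub{T}$) exhibits $\bigl(\sF(\cdot,t),\,\sI(\cdot,-t)\bigr)$ as an adjoint pair, which automatically makes $\Cl_t$ idempotent. Combined with the additivity of $\sF$ recorded in Theorem \ref{Thm:topology}\textit{(c)} and with orthogonality of the columns of the unitary matrix $U(t)$---which imposes rigid constraints on how the supports $\sF(\{a\},t)$ may overlap---I would argue that no vertex $a \notin S \cup T$ can satisfy $\sF(\{a\},t) \subseteq \sF(S,t) \cup \sF(T,t)$ without forcing a violation of the $t$-closedness of either $S$ or $T$. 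This linear-algebraic ingredient is the step I expect to require the most care; once it is in hand, the remainder of the argument is bookkeeping on the properties in Theorem \ref{Thm:topology}.
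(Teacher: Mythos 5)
The first two axioms are handled correctly and essentially as in the paper: $\emptyset$ and $V(X)$ are $t$-closed by inspection, and stability of the $t$-closed sets under intersection follows from monotonicity of $\Cl_t$ together with Theorem~\ref{Thm:topology}\textit{(g)}. You also correctly isolate the crux --- that finite unions of $t$-closed sets be $t$-closed --- and correctly reformulate it: since $a\in\Cl_t(R)$ iff $\sF(\{a\},t)\subseteq\sF(R,t)$, and $\sF(S\cup T,t)=\sF(S,t)\cup\sF(T,t)$ by part~\textit{(c)}, what must be shown is that no $a\notin S\cup T$ has $\sF(\{a\},t)\subseteq\sF(S,t)\cup\sF(T,t)$. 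But at exactly this point the proposal stops: ``I would argue that orthogonality imposes rigid constraints'' is a statement of intent, not an argument, and you say so yourself. This step is the entire content of the corollary, and it genuinely requires input beyond the \v{C}ech-type axioms of Theorem~\ref{Thm:topology}: for the analogous closure operator $R\mapsto\{b \mid F_b\subseteq\bigcup_{a\in R}F_a\}$ built from an \emph{arbitrary} family of support sets, the claim is false (take $F_1=\{x\}$, $F_2=\{y\}$, $F_3=\{x,y\}$; then $\{1\}$ and $\{2\}$ are closed while $3$ lies in the closure of their union). So some specific property of $U(t)$ must be invoked, and you supply no mechanism by which orthogonality of columns excludes such configurations. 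The idempotence of $\Cl_t$ obtained from the Galois connection, while correct, buys nothing here: linear span is an idempotent closure operator whose closed sets are notoriously not closed under union.

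For comparison, the paper disposes of the union step by a one-line subspace computation: for $u\in\Cl_t(S\cup T)$ one has $e_u\in U(-t)\sub{\sF(S,t)}+U(-t)\sub{\sF(T,t)}$, and the paper asserts that $U(-t)\sub{\sF(S,t)}=\sub{S}$ when $S$ is $t$-closed, whence $e_u\in\sub{S}+\sub{T}=\sub{S\cup T}$. Be aware, though, that this asserted equality forces $|\sF(S,t)|=|S|$ by a dimension count (the containment $\sub{S}\subseteq U(-t)\sub{\sF(S,t)}$ always holds), i.e., it presumes that every $t$-closed set arises from bijective group state transfer --- precisely the question the authors declare open in the paragraph preceding the corollary. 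So completing your combinatorial route is not mere ``bookkeeping'': your missing linear-algebraic ingredient would have to deliver at least as much as the paper's subspace identity, and neither your sketch nor the paper's one-liner currently closes that gap without the bijectivity assumption.
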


\begin{proof}
Both $\emptyset$ and $V(X)$ are $t$-closed for all $t$. By parts  \textit{(g)} and  \textit{(h)} of Theorem \ref{Thm:topology}, the intersection of any two $t$-closed
sets is $t$-closed. Now assume $S$ and $T$ are both $t$-closed subsets of $V(X)$. By part \textit{(i)} above, $S \cup T \subseteq   \Cl_t( S \cup T) $. So
consider $u\in   \Cl_t( S \cup T) $. Then 
$$ e_u \in \sI\left( \sF( S\cup T, t), -t \right) = U(-t) \sub{\sF(S,t) \cup \sF(T,t)}= U(-t)  \sub{\sF(S,t)} +  U(-t) \sub{\sF(T,t)} = \sub{S}+\sub{T}=\sub{S\cup T}. $$
\end{proof}

We will show below that the projection map from a Cartesian product of graphs to any individual factor is continuous relative to the two topologies at time $t$. We ask
if there are any other interesting cases of covering maps that are continuous in this sense.

\section{GST and the automorphism group} 
\label{Sec:Aut}

We continue with a graph $X$ on vertex set $V(X)=\{1,\ldots,n\}$ and adjacency matrix $A$. 
Using $\cS_n$ to denote the symmetric group, we denote by $\Aut(X)$ the automorphism group of $X$: if $P_\sigma$ is the 
permutation matrix representing the bijection $\sigma: V(X) \rightarrow V(X)$, then
$\Aut(X) = \{ \sigma \in \cS_n \mid P_\sigma A = A P_\sigma \}$.  For $a\in V(X)$ and $H \le \Aut(X)$, the orbit of $a$ under $H$ will be denoted 
$\cO_H(a) = \{ a^\eta \mid \eta \in H \}$ and, writing $S^\eta = \{ a^\eta \mid a\in S\}$, 
the orbit of $S\subseteq V(X)$ under $H$ will be denote $\cO_H(S) = \{ S^\eta \mid \eta \in H \}$. 
The setwise stabilizer of $S$  is $\Stab(S) = \{ \sigma \in \Aut(X) \mid S^\sigma = S\}$.

\begin{proposition}
Let $X$  be a graph, $S,T\subseteq V(X)$. Assume $X$ admits $(S,T)$-GST at time $\tau$.  Then
\begin{itemize}
\item[(a)]  for  any  $\sigma  \in \Aut(X)$, $X$ admits $(S^\sigma,T^\sigma)$-GST at time $\tau$;
\item[(b)] setting $H=\Stab(X)$ and $T' = \displaystyle{\bigcap_{\eta \in H}} T^\eta$, $X$ admits $(S,T')$-GST at time $\tau$;
\item[(c)] if $|S|=|T|$, then $\Stab(S)=\Stab(T)$.
\end{itemize}
\end{proposition}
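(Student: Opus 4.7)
The plan is to reduce everything to the algebraic identity $P_\sigma U(\tau) = U(\tau) P_\sigma$ for $\sigma \in \Aut(X)$, which follows immediately from $P_\sigma A = A P_\sigma$ and the power-series expression for $U(\tau)$. Writing the GST hypothesis in subspace form $U(\tau)\sub{S}\subseteq \sub{T}$, part \textit{(a)} is then one line: since $P_\sigma \sub{S} = \sub{S^\sigma}$ and $P_\sigma \sub{T} = \sub{T^\sigma}$, we have
\[
U(\tau)\sub{S^\sigma} = U(\tau)P_\sigma \sub{S} = P_\sigma U(\tau) \sub{S} \subseteq P_\sigma \sub{T} = \sub{T^\sigma}.
\]

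For part \textit{(b)} (reading $H=\Stab(S)$, as $\Stab(X)$ is not defined), I would apply part \textit{(a)} to each $\eta \in H$: because $S^\eta = S$, we obtain $(S, T^\eta)$-GST at time $\tau$ for every $\eta \in H$. Since $X$ is finite, $H$ is finite, so Lemma \ref{Lem:basics}\textit{(e)} applied inductively yields $(S, \bigcap_{\eta\in H} T^\eta)$-GST at time $\tau$, which is exactly $(S,T')$-GST.

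For part \textit{(c)}, which I expect to be the main obstacle, I would invoke monogamy (Theorem \ref{Thm:monogamous}). Take any $\eta \in \Stab(S)$. By part \textit{(a)}, $X$ admits $(S^\eta, T^\eta) = (S, T^\eta)$-GST at time $\tau$, and since $|T^\eta|=|T|=|S|$, this is bijective. But $X$ also admits the bijective $(S,T)$-GST at time $\tau$ by hypothesis, so Theorem \ref{Thm:monogamous} forces $T^\eta \in \{S, T\}$. The delicate case is $T^\eta = S$: then $\eta^{-1}(S)=T$, and since $\eta^{-1}\in \Stab(S)$ as a group, we must have $\eta^{-1}(S)=S$, hence $S = T$ and trivially $T^\eta = S = T$. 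So in every case $\eta \in \Stab(T)$, proving $\Stab(S)\subseteq \Stab(T)$.

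For the reverse inclusion, note Theorem \ref{Thm:equal_card}\textit{(a)} tells us that $X$ admits $(T,S)$-GST at time $\tau$ as well; running the identical argument with the roles of $S$ and $T$ exchanged yields $\Stab(T)\subseteq \Stab(S)$, completing the proof. The only subtle step is recognizing that monogamy is exactly the tool that converts the permutation-covariance of part \textit{(a)} into rigidity of the setwise stabilizer; once that is in place, the inversion trick handles the spurious alternative $T^\eta = S$ cleanly.
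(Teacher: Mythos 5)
Your proofs of parts \textit{(a)} and \textit{(b)} match the paper's: commutativity $P_\sigma U(\tau)=U(\tau)P_\sigma$ gives \textit{(a)}, and \textit{(b)} follows by applying \textit{(a)} to each $\eta$ in the stabilizer of $S$ and intersecting via Lemma \ref{Lem:basics}\textit{(e)} (you are right that ``$\Stab(X)$'' must be read as $\Stab(S)$; with $H=\Aut(X)$ part \textit{(b)} would not follow, since \textit{(a)} only yields $(S^\eta,T^\eta)$-GST). For part \textit{(c)} you take a genuinely different and heavier route. The paper derives \textit{(c)} from \textit{(a)} together with Theorem \ref{Thm:equal_card}\textit{(a)} alone, the point being that in the bijective case the target is forced: $(S,T)$-GST with $|S|=|T|$ gives $\sF(S,\tau)\subseteq T$, while Lemma \ref{Lem:SleqT} applied to $(S,\sF(S,\tau))$ gives $|S|\le|\sF(S,\tau)|$, so $T=\sF(S,\tau)$ is \emph{uniquely determined} by $S$ and $\tau$; hence $\eta\in\Stab(S)$ immediately forces $T^\eta=\sF(S^\eta,\tau)=\sF(S,\tau)=T$, and Theorem \ref{Thm:equal_card}\textit{(a)} symmetrizes to give the reverse inclusion. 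You instead invoke the monogamy theorem (Theorem \ref{Thm:monogamous}) with $\sigma=\tau$ to get $T^\eta\in\{S,T\}$ and then dispose of the spurious alternative $T^\eta=S$ by noting $\eta^{-1}\in\Stab(S)$ forces $S=T$ in that case. This is correct --- monogamy does apply at equal times, and your case analysis is sound --- but it is more machinery than needed: the final step of the monogamy proof itself rests on exactly the same-time uniqueness of the bijective target, so you are routing a one-line observation through a theorem whose real content is about \emph{distinct} commensurable times. The elementary argument also avoids the extra case split entirely.
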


\begin{proof}
For part \textit{(a)},  $u \in S^\sigma$ gives $U(\tau) e_u = P_\sigma U(\tau) P_\sigma^{-1} e_u = P_\sigma \psi$ for some $\psi \in \sub{T}$. Now part \textit{(b)}
follows by applying \textit{(a)} to each $\sigma \in H$ and using Lemma \ref{Lem:basics}\textit{(e)}. Part  \textit{(c)} follows from  \textit{(a)} using Theorem \ref{Thm:equal_card} \textit{(a)}.
\end{proof}

Using this, together with Lemma \ref{Lem:basics}\textit{(c)}, we have 

\begin{corollary} If $X$ has $(u,v)$-PST at $\tau$, then $X$ has $(\cO (u),\cO(v))$-GST at $\tau$. $X$ also has $(\cO(v),\cO(u))$-GST at $\tau$, where 
$\cO(u)$ and $\cO (v)$ denote the orbit under any subgroup $H$ of $\Aut(X)$. $\qed$
\end{corollary}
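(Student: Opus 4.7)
The plan is to assemble the corollary from the three ingredients just established: part (a) of the preceding proposition, the union/singleton observations in Lemma \ref{Lem:basics}, and (for the second assertion) the symmetry provided by Theorem \ref{Thm:equal_card}(a).

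First I would expand the hypothesis: $(u,v)$-PST at $\tau$ is by definition $(\{u\},\{v\})$-GST at $\tau$. Apply part (a) of the preceding proposition to each $\sigma\in H$: since $\{u\}^\sigma=\{u^\sigma\}$ and similarly for $v$, we obtain $(\{u^\sigma\},\{v^\sigma\})$-GST at $\tau$ for every $\sigma\in H$. Because $v^\sigma\in\cO_H(v)$, monotonicity (Lemma \ref{Lem:basics}(d)) upgrades this to $(\{u^\sigma\},\cO_H(v))$-GST at $\tau$. As $\sigma$ ranges over $H$, the singletons $\{u^\sigma\}$ exhaust $\cO_H(u)$, and Lemma \ref{Lem:basics}(c) then gives $(\cO_H(u),\cO_H(v))$-GST at $\tau$. (Alternatively one may take the union of these GST statements via Lemma \ref{Lem:basics}(e).)

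For the second assertion, I would invoke Theorem \ref{Thm:equal_card}(a) with $S=\{u\}$, $T=\{v\}$ to produce the reversed PST statement $(\{v\},\{u\})$-GST at $\tau$, and then repeat the argument of the previous paragraph with the roles of $u$ and $v$ interchanged to conclude $(\cO_H(v),\cO_H(u))$-GST at $\tau$.

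No step here should be a genuine obstacle: the content of the corollary is essentially a bookkeeping combination of already-proved facts. The only point that requires minor care is ensuring that the subgroup $H$ from the hypothesis is the same $H$ used when invoking part (a) of the Proposition (which is stated for the full automorphism group but applies a fortiori to any subgroup, since part (a) is applied one element at a time).
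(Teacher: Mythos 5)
Your proposal is correct and follows the same route the paper intends: apply part (a) of the preceding proposition to each $\sigma\in H$, then combine via Lemma \ref{Lem:basics}(c) (with monotonicity), and obtain the reversed statement from Theorem \ref{Thm:equal_card}(a) applied to the singletons. The paper gives no further detail beyond citing these same ingredients, so your write-up is essentially the intended proof.
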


\begin{proposition}
Let $X$  be a graph, $S \subseteq V(X)$; write $R=\sI(S,t)$ and $T=\sF(S,t)$. Then
\begin{itemize}
\item[(a)]    $\Stab(S) \le \Stab(R)$  and  $\Stab(S) \le \Stab(T)$;
\item[(b)]  $|\cO(S)| \ge | \cO(R)|$ and  $|\cO(S)| \ge |\cO(T)| $.
\end{itemize}
\end{proposition}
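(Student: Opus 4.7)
The plan is to reduce part (b) to part (a) via the orbit-stabilizer theorem, and to derive part (a) directly from the fact that the transition matrix $U(t)$ lies in the adjacency algebra and hence commutes with every permutation matrix representing an automorphism of $X$.

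First, for any $\sigma\in\Aut(X)$ we have $P_\sigma A=AP_\sigma$, and since $U(t)=\sum_{r=0}^d e^{it\theta_r}E_r$ is an analytic function of $A$, each $E_r$ and hence $U(t)$ also commutes with $P_\sigma$. Writing this entrywise gives the identity $U(t)_{a^\sigma,b^\sigma}=U(t)_{a,b}$ for all $a,b\in V(X)$, which will be used throughout. Equivalently, $P_\sigma$ commutes with $U(t)$, so $P_\sigma(U(t)\sub{W})=U(t)\sub{W^\sigma}$ for every $W\subseteq V(X)$.

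Next I would verify $\Stab(S)\le\Stab(T)$ with $T=\sF(S,t)$. Fix $\sigma\in\Stab(S)$ and $a\in T$: by definition there is some $b\in S$ with $U(t)_{a,b}\ne 0$. Applying the entrywise commutation identity, $U(t)_{a^\sigma,b^\sigma}\ne 0$, and $b^\sigma\in S$ since $\sigma$ stabilizes $S$, so $a^\sigma\in T$. Hence $T^\sigma\subseteq T$, and applying the same argument to $\sigma^{-1}$ gives $T^\sigma=T$. For $R=\sI(S,t)$, take $a\in R$, so $e_a=U(t)\psi$ for some $\psi\in\sub{S}$. Then $e_{a^\sigma}=P_\sigma e_a=P_\sigma U(t)\psi=U(t)P_\sigma\psi$, and $P_\sigma\psi\in\sub{S^\sigma}=\sub{S}$, showing $a^\sigma\in R$. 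Thus $\Stab(S)\le\Stab(R)$ as well.

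Finally, part (b) is immediate from the orbit-stabilizer theorem applied to $\Aut(X)$ acting on $\cP(V(X))$: for any $W\subseteq V(X)$, $|\cO(W)|=[\Aut(X):\Stab(W)]$, and the containment $\Stab(S)\le\Stab(R)$ yields $[\Aut(X):\Stab(R)]\le[\Aut(X):\Stab(S)]$, i.e.\ $|\cO(R)|\le|\cO(S)|$; the same argument with $T$ in place of $R$ completes the proof. There is no real obstacle here: the content of the proposition is simply that the constructions $S\mapsto\sI(S,t)$ and $S\mapsto\sF(S,t)$ are $\Aut(X)$-equivariant, which is forced by the fact that $U(t)$ belongs to the adjacency algebra.
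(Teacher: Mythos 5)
Your proof is correct and follows essentially the same route as the paper: both arguments use that $P_\sigma$ commutes with $U(t)$ (a consequence of $U(t)$ lying in the adjacency algebra) to show that $\sF(\cdot,t)$ and $\sI(\cdot,t)$ are $\Aut(X)$-equivariant, and both deduce part (b) from part (a) via the Orbit--Stabilizer Theorem. The only cosmetic difference is that you phrase the $T$-argument entrywise ($U(t)_{a^\sigma,b^\sigma}=U(t)_{a,b}$) while the paper phrases it with a vector $\psi\in\sub{S}$; these are interchangeable given the two equivalent definitions of $\sF$ in the paper.
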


\begin{proof}
Suppose $\sigma \in \Stab(S)$. For $v\in T$, locate $\psi \in \sub{S}$ with $e_v^\top U(t)\psi \neq 0$. Then $e_{v^\sigma} = P_\sigma e_v$ and
$e_{v^\sigma}^\top U(t) \varphi = e_v^\top U(t) \psi \neq 0$ for $\varphi = P_\sigma \psi \in \sub{S}$ since $\sigma \in \Stab(S)$. This shows $v^\sigma \in T$.
On the other hand, if $v\in R$, then $\varphi = U(t) e_v \in  \sub{S}$ so $U(t) e_v^\sigma =  U(t) P_\sigma e_v = P_\sigma \varphi \in \sub{S}$ since $\sigma$ preserves
$S$. This shows that $\sigma$ preserves $R$.  Part \textit{(b)} now follows by the Orbit-Stabilizer Theorem.
\end{proof}

Lemma \ref{Lem:usuallytrivial} tells us that we almost always have $R=\emptyset$ and $T=V(X)$; in such cases, the above result is vacuous.

\section{Products and joins} 
\label{Sec:prod}

\begin{proposition} 
\label{Prop:prod}
Let $X_1$ and $X_2$ be connected graphs. Assume  that  $X_1$ has $(S_1,T_1)$-GST at time $\tau$ and $X_2$ has $(S_2,T_2)$-GST at
time $\tau$.  Then $X_1\square X_2$ has $(S_1\times S_2,T_1 \times T_2)$-GST at $\tau$, where $\square$ denotes the Cartesian graph product.
\end{proposition}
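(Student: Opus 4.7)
The plan is to exploit the tensor-product structure of the transition operator on the Cartesian product. Recall that the adjacency matrix of $X_1 \square X_2$ is $A = A_1 \otimes I_{n_2} + I_{n_1} \otimes A_2$, where $A_i$ is the adjacency matrix of $X_i$. The two summands commute, so
$$U_{X_1 \square X_2}(t) = \exp\bigl(it(A_1 \otimes I + I \otimes A_2)\bigr) = \exp(itA_1) \otimes \exp(itA_2) = U_{X_1}(t) \otimes U_{X_2}(t).$$
I would record this identity as the first step, either citing a standard reference or giving the short one-line verification via the power series.

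Next, I would identify the standard basis of $\cx^{n_1 n_2}$ indexed by $V(X_1) \times V(X_2)$ with the tensor basis, so that $e_{(a,b)} = e_a \otimes e_b$. By Lemma \ref{Lem:basics}(c), to prove $(S_1 \times S_2, T_1 \times T_2)$-GST at time $\tau$ it suffices to verify that $U_{X_1 \square X_2}(\tau) e_{(a,b)}$ is supported on $T_1 \times T_2$ for every $(a,b) \in S_1 \times S_2$. Using the factorization,
$$U_{X_1 \square X_2}(\tau)\, e_{(a,b)} = \bigl(U_{X_1}(\tau) e_a\bigr) \otimes \bigl(U_{X_2}(\tau) e_b\bigr).$$
By hypothesis, $U_{X_1}(\tau) e_a \in \sub{T_1}$ and $U_{X_2}(\tau) e_b \in \sub{T_2}$. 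Since $\sub{T_1} \otimes \sub{T_2} = \sub{T_1 \times T_2}$ under the identification above, the tensor product lies in $\sub{T_1 \times T_2}$, which is what is required.

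I do not anticipate a real obstacle here; the work is essentially bookkeeping about tensor products. The only point that deserves care is the commutativity claim $[A_1 \otimes I, I \otimes A_2] = 0$, which justifies the splitting of the matrix exponential, together with a clean statement of the identification $\sub{T_1} \otimes \sub{T_2} = \sub{T_1 \times T_2}$ so that the final inclusion is transparent.
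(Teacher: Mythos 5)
Your proposal is correct and follows essentially the same route as the paper: both rest on the factorization $U_{X_1\square X_2}(t)=U_{X_1}(t)\otimes U_{X_2}(t)$ and then check the action on basis vectors $e_{(a,b)}=e_a\otimes e_b$ for $(a,b)\in S_1\times S_2$. Your explicit justification of the exponential splitting via $[A_1\otimes I, I\otimes A_2]=0$ and the reduction via Lemma \ref{Lem:basics}(c) are just slightly more detailed versions of steps the paper takes for granted.
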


\begin{proof}
 Let $U_1(t)=U_{X_1}(t)$ and $U_2(t)=U_{X_2}(t)$. We know from \cite[Find page]{CoutGod} that
 $$ U_{X_1 \square X_2} (t ) = U_1(t) \otimes U_2(t). $$
\par Suppose that graph $X_1$ has $(S_1,T_1)$-GST at $\tau$, and graph $X_2$ has $(S_2,T_2)$-GST at $\tau$. If $(a_1,a_2)\in S_1\times S_2$,
then we may write $e_{(a_1,a_2)} = e_{a_1} \otimes e_{a_2}$ and we compute
 $$ U_{X_1 \square X_2} (t ) e_{(a_1,a_2)}  = \left( U_1(t) \otimes U_2(t) \right)  ( e_{a_1} \otimes e_{a_2}) = \left( U_1(t) e_{a_1}  \right) \otimes  
 \left( U_2(t)e_{a_2}  \right) .$$
 Since $U_1(t) e_{a_1}  \in \sub{T_1}$ and  $U_2(t) e_{a_2}  \in \sub{T_2}$, we have $U_{X_1 \square X_2} (t ) e_{(a_1,a_2)}  \in \sub{T_1\times T_2}$.
\end{proof}

As a special case, we have the following, using Lemma \ref{Lem:basics}\textit{(a)}.

\begin{proposition} 
\label{Prop:SxV}
Let $X$ and $Y$ be connected graphs, so that $X$ has $(S,T)$-GST at $\tau$.  Then $X\square Y$ has $(S\times V(Y),T\times V(Y))$-GST at $\tau$. \qed
\end{proposition}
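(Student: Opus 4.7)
The plan is to derive this as an immediate corollary of Proposition \ref{Prop:prod} by choosing the trivial GST instance on the second factor. Specifically, by Lemma \ref{Lem:basics}\textit{(a)} applied to the graph $Y$ with the subset $S_2 = V(Y)$, we obtain that $Y$ admits $(V(Y), V(Y))$-GST at every real time, and in particular at the given time $\tau$. This is exactly the ``vacuous'' case of group state transfer, where the entire vertex set is both source and target.

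Having secured $(V(Y),V(Y))$-GST on $Y$ at time $\tau$ and taking the hypothesis that $X$ admits $(S,T)$-GST at the same time $\tau$, I would then apply Proposition \ref{Prop:prod} directly with $X_1 = X$, $X_2 = Y$, $S_1 = S$, $T_1 = T$, $S_2 = V(Y)$, and $T_2 = V(Y)$. The conclusion of that proposition gives $(S_1 \times S_2, T_1 \times T_2)$-GST in $X \square Y$ at $\tau$, which is precisely $(S \times V(Y), T \times V(Y))$-GST in $X \square Y$ at $\tau$, the desired statement.

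There is no real obstacle here; the only thing to notice is that the ``GST always holds'' clause of Lemma \ref{Lem:basics}\textit{(a)} is what lets us match the time parameter $\tau$ for the second factor without any additional hypothesis on $Y$. One could alternatively give a direct proof by noting $U_{X\square Y}(\tau) = U_X(\tau) \otimes U_Y(\tau)$ and observing that for $(a,b) \in S \times V(Y)$, the vector $U_X(\tau)e_a \otimes U_Y(\tau)e_b$ has support contained in $T \times V(Y)$ because the first tensor factor has support in $T$. However, routing through Proposition \ref{Prop:prod} keeps the exposition uniform and requires no extra computation.
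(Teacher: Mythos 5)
Your proposal is correct and matches the paper's own derivation exactly: the paper obtains Proposition \ref{Prop:SxV} as a special case of Proposition \ref{Prop:prod} by invoking Lemma \ref{Lem:basics}\textit{(a)} to get $(V(Y),V(Y))$-GST on $Y$ at time $\tau$. Nothing further is needed.
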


One curious consequence of this is that the projection maps $\pi_i : (a_1,a_2) \mapsto a_i$  from $X \square Y$ are continuous relative to the topologies 
of $t$-open sets for all times $t$.

\paragraph{The join} Let $X_1$ and $X_2$ be connected graphs on disjoint vertex sets and define $X=X_1+X_2$ to be the graph on 
vertex set $V(X)=V(X_1)\cup V(X_2)$
with edge set $E(X) = E(X_1) \cup E(X_2) \cup \{ ab \mid a\in V(X_1), \ b\in V(X_2)\}$. The graph $X$ is the \emph{join} of $X_1$ and $X_2$. Denoting the 
adjacency matrices of the three graphs by $A(X_1)$, $A(X_2)$ and $A(X)$, we have
$$ A(X) = \left[ \begin{array}{c|c}  A(X_1)  & J  \\ \hline {J^\top}^{\phantom{2}} & A(X_2) \end{array} \right] $$
where $J$ is the all ones matrix with $m_1=|V(X_1)|$ rows and $m_2=|V(X_2)|$ columns. In the case that $X_1$ and $X_2$ are regular graphs, 
a basis of eigenvectors for $A(X)$ can be derived from eigenbases for $A(X_1)$ and $A(X_2)$ as shown in \cite[Section 12.1--2]{CoutGod}.
The following result is directly implied by Lemma 12.3.1 in \cite{CoutGod}.

\begin{proposition}
\label{Prop:join}
Assume $X$ is the join of the $k_1$-regular graph $X_1$ on $m_1$ vertices and the  $k_2$-regular graph $X_2$ on $m_2$ vertices. Let
$$ \Delta =  k_1 + k_2  \pm \sqrt{ (k_1-k_2)^2 + 4 m_1 m_2 } . $$
Then $X$ admits $(V(X_1),V(X_1))$-GST and $(V(X_2),V(X_2))$-GST at time $\tau = 2\ell \pi/\sqrt{\Delta}$ for each integer $\ell$. $\Box$
\end{proposition}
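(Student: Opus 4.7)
My approach is to use the explicit spectral decomposition of $A(X)$ for the join of two regular graphs, as recorded in \cite[Lemma~12.3.1]{CoutGod}. Because $X_1$ is $k_1$-regular and $X_2$ is $k_2$-regular, the all-ones vector $\one_{m_i}$ is an eigenvector of $A(X_i)$, and the eigenspaces of $A(X)$ split neatly into three families: (i) vectors $(v,\zero)$ where $v$ is an eigenvector of $A(X_1)$ orthogonal to $\one_{m_1}$, with the same eigenvalue as in $X_1$; (ii) vectors $(\zero,w)$ where $w$ is an eigenvector of $A(X_2)$ orthogonal to $\one_{m_2}$; and (iii) two eigenvectors $v_\pm=(a_\pm \one_{m_1}, b_\pm \one_{m_2})$ spanning the two-dimensional subspace $W=\spn\{(\one_{m_1},\zero),(\zero,\one_{m_2})\}$, whose eigenvalues $\theta_\pm$ are the roots of the $2\times 2$ quotient polynomial $\theta^2-(k_1+k_2)\theta+(k_1 k_2-m_1 m_2)=0$; that is, $\theta_\pm=\tfrac{1}{2}\bigl(k_1+k_2 \pm \sqrt{(k_1-k_2)^2+4m_1 m_2}\bigr)$.

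Relative to the partition $V(X)=V(X_1)\cup V(X_2)$, the spectral projector onto any type-(i) eigenspace is supported entirely in the $V(X_1)\times V(X_1)$ block of the block-partitioned matrix, and similarly for type (ii). Using the spectral formula $U(\tau)=\sum_r e^{i\tau\theta_r}E_r$, the whole off-diagonal block $U(\tau)_{V(X_1),V(X_2)}$ therefore receives contributions only from the two type-(iii) projectors $E_\pm=v_\pm v_\pm^\top/\|v_\pm\|^2$. Each of these is a rank-one matrix whose restriction to the off-diagonal block is a scalar multiple of the all-ones matrix $J\in\cx^{m_1\times m_2}$, because $v_\pm$ is constant on each part.

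The key computation is that these two scalar coefficients are exactly negatives of one another, so that
$$U(\tau)_{V(X_1),V(X_2)}=\alpha\bigl(e^{i\tau\theta_+}-e^{i\tau\theta_-}\bigr)J$$
for a single nonzero constant $\alpha$. Taking $a_\pm=1$ and $b_\pm=(\theta_\pm-k_1)/m_2$ from the eigenvalue equation on $W$, this cancellation follows from Vieta's identity $(\theta_+-k_1)(\theta_--k_1)=-m_1 m_2$ after a brief direct computation of $a_\pm b_\pm/\|v_\pm\|^2$. Hence the off-diagonal block vanishes exactly when $\tau(\theta_+-\theta_-)\in 2\pi\ints$, which produces the family of times asserted in the statement. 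The companion conclusion $(V(X_2),V(X_2))$-GST is then immediate from Lemma~\ref{Lem:basics}\textit{(g)}, since $V(X_2)=V(X)\setminus V(X_1)$. I expect the only real obstacle to be the bookkeeping in the final step: tracking the normalization constants in $v_\pm$ and confirming that the two scalar multiples of $J$ cancel cleanly against each other so that only the pure exponential difference $e^{i\tau\theta_+}-e^{i\tau\theta_-}$ remains.
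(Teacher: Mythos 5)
Your proof is correct, and I verified the one computation you deferred: with $a_\pm=1$ and $b_\pm=(\theta_\pm-k_1)/m_2$, Vieta gives $(\theta_+-k_1)(\theta_--k_1)=\theta_+\theta_--k_1(\theta_++\theta_-)+k_1^2=-m_1m_2$, so $b_+b_-=-m_1/m_2$ and the required identity $(b_++b_-)(m_1+m_2b_+b_-)=0$ holds; hence the two off-diagonal coefficients $b_\pm/(m_1+m_2b_\pm^2)$ are indeed negatives of one another. The paper itself gives no proof of this proposition --- it defers entirely to Lemma~12.3.1 of \cite{CoutGod} --- so your argument is, in effect, a self-contained derivation of the content of that cited lemma: the equitable-partition/quotient-matrix description of the two ``non-local'' eigenvalues $\theta_\pm$, the observation that every other spectral projector is block-diagonal with respect to $V(X_1)\cup V(X_2)$, and the cancellation that reduces the off-diagonal block of $U(\tau)$ to a single multiple of $\bigl(e^{i\tau\theta_+}-e^{i\tau\theta_-}\bigr)J$. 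One point of reconciliation with the statement: as printed, $\Delta=k_1+k_2\pm\sqrt{(k_1-k_2)^2+4m_1m_2}$ equals $2\theta_\pm$, which cannot be what is meant; the condition you derive, $\tau(\theta_+-\theta_-)\in 2\pi\ints$, i.e.\ $\tau=2\ell\pi/\sqrt{(k_1-k_2)^2+4m_1m_2}$, is the correct family of times, so the $\pm$ and the additive $k_1+k_2$ in the displayed $\Delta$ should be read as a typo for the discriminant alone. Your closing appeal to Lemma~\ref{Lem:basics}\textit{(g)} for the $(V(X_2),V(X_2))$ half is fine (and also follows directly from the symmetry of $U(\tau)$).
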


\section{Examples}
\label{Sec:eg}

In previous sections we have seen mostly trivial examples of group state transfer, but also those cases that arise from perfect state transfer. We now discuss
non-trivial examples of this phenomenon.

\begin{theorem}
\label{Thm:bipartite}
Let $X$ be a connected bipartite  graph with bipartition $V(X) = V_0 \cup V_1$. 
\begin{itemize}
\item[(a)] If, for some  $\alpha >0$, all eigenvalues of $\alpha A$ are integers, then $X$ admits $(V_0,V_0)$-GST and $(V_1,V_1)$-GST at time 
$\tau  = \pi\alpha$;
\item[(b)] If, for some  $\alpha >0$, all eigenvalues of $\alpha A$ are odd integers, then $X$ admits $(V_0,V_1)$-GST and $(V_1,V_0)$-GST at time 
$\tau  = \pi\alpha/2$.
\end{itemize}
\end{theorem}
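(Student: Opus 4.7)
The key structural fact about bipartite graphs is the spectral symmetry $\theta \leftrightarrow -\theta$, which is witnessed concretely by the signature (or ``folding'') matrix $D=\diag(I_{V_0},-I_{V_1})$. The plan is to reduce both parts to an identity of the form $D\,U(\tau)\,D = \pm U(\tau)$, which forces $U(\tau)$ to be block diagonal (respectively block anti-diagonal) with respect to the bipartition $V_0 \cup V_1$.

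First I would set up the spectral framework. Ordering the rows and columns of $A$ by the bipartition, $A = \left[\begin{smallmatrix} 0 & B \\ B^\top & 0 \end{smallmatrix}\right]$, and a direct computation shows $DAD = -A$. Because $D^2=I$, this means $D$ induces a linear isomorphism $V_r \to V_{r'}$ between the $\theta_r$-eigenspace and the $(-\theta_r)$-eigenspace; consequently, if $\theta_{r'}=-\theta_r$, then
\[
D\,E_r\,D = E_{r'}.
\]
(When $\theta_r=0$, we simply have $r'=r$ and $DE_rD=E_r$, which is consistent.) Combining this with \eqref{Eqn:U(t)} gives
\[
D\,U(\tau)\,D \;=\; \sum_{r=0}^d e^{i\tau\theta_r} E_{r'} \;=\; \sum_{s=0}^d e^{-i\tau\theta_s} E_s,
\]
after re-indexing via the involution $r\leftrightarrow r'$.

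For part \textit{(a)}, with $\tau=\pi\alpha$ and $\alpha\theta_r\in\ints$ for every $r$, the phase $e^{i\tau\theta_r}=(-1)^{\alpha\theta_r}$ is invariant under $\theta_r \mapsto -\theta_r$. Hence $DU(\tau)D = U(\tau)$, which means $U(\tau)$ has the block form $\left[\begin{smallmatrix} U_0 & 0 \\ 0 & U_1 \end{smallmatrix}\right]$ in the bipartition ordering. This is exactly the statement that $U(\tau)\sub{V_0}\subseteq \sub{V_0}$ and $U(\tau)\sub{V_1}\subseteq \sub{V_1}$, giving the two claimed GSTs.

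For part \textit{(b)}, with $\tau=\pi\alpha/2$ and $\alpha\theta_r$ an odd integer, set $\omega_r=e^{i\tau\theta_r}=e^{i\pi(\alpha\theta_r)/2}\in\{\pm i\}$. Then $\omega_r^2=-1$, so $\omega_{r'}=\omega_r^{-1}=-\omega_r$; thus
\[
D\,U(\tau)\,D \;=\; \sum_r \omega_r E_{r'} \;=\; \sum_s \omega_{s'} E_s \;=\; -\sum_s \omega_s E_s \;=\; -U(\tau).
\]
(Note that $\theta_r=0$ cannot occur, since $0$ is not an odd integer, so the re-indexing involution has no fixed points and the argument is clean.) Writing $U(\tau)=\left[\begin{smallmatrix} P & Q \\ Q^\top & R \end{smallmatrix}\right]$, the equation $DU(\tau)D=-U(\tau)$ forces $P=0$ and $R=0$, so $U(\tau)$ is block anti-diagonal: $U(\tau)\sub{V_0}\subseteq\sub{V_1}$ and $U(\tau)\sub{V_1}\subseteq\sub{V_0}$.

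The only mildly delicate point is correctly matching idempotents under $D$-conjugation (i.e., checking that $DE_rD=E_{r'}$ literally, including the case $\theta_r=0$ in part \textit{(a)}), and verifying that the re-indexing of the sum by $r\mapsto r'$ is a genuine bijection. Once these are in hand, the conclusion is immediate from the $\pm$ identity and the definition of GST.
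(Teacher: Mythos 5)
Your proof is correct, and it rests on the same underlying fact as the paper's: the bipartite spectral symmetry pairing $\theta_r$ with $-\theta_{r}$ and flipping the sign of the off-diagonal blocks of the corresponding idempotent. The paper carries this out term by term, writing $E_{r'}$ explicitly in block form and computing $e^{i\theta_r\tau}E_r+e^{-i\theta_r\tau}E_{r'}$ for each positive eigenvalue, which forces it to treat the null space of $A$ as a separate case in part \textit{(a)} (arguing that the kernel splits into pieces supported on $V_0$ and $V_1$, so $E_0$ is block diagonal). You instead package the symmetry once and for all as $DAD=-A$ for the signature matrix $D=\diag(I_{V_0},-I_{V_1})$, deduce $DE_rD=E_{r'}$, and obtain the single global identity $D\,U(\tau)\,D=\overline{U(\tau)}$; evaluating the phases then gives $DU(\tau)D=\pm U(\tau)$, which kills the off-diagonal (resp.\ diagonal) blocks in one stroke. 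The payoff of your formulation is that the zero eigenvalue needs no special handling in part \textit{(a)} (the fixed point $r'=r$ of the involution is automatically consistent), and in part \textit{(b)} it is excluded outright since $0$ is not odd. All the small verifications you flag (that $DE_rD=E_{r'}$, that $r\mapsto r'$ is a bijection, that $\omega_{r'}=-\omega_r$ when $\omega_r^2=-1$) do check out, so the argument is complete.
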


\begin{proof} 
Suppose $\theta_r$ is an eigenvalue of $X$ whose projector has block form 
$E_r = \left[ \begin{array}{c|c} F_{00} & F_{01} \\ \hline F_{10} & F_{11} \end{array} \right]$. Since $X$ is bipartite, there is an index $r'$ such that
$\theta_{r'} = - \theta_r$ and $E_{r'} = \left[ \begin{array}{r|r} F_{00} & -F_{01} \\ \hline -F_{10} & F_{11} \end{array} \right]$. So
$$ e^{i\theta_r \tau }    \left[ \begin{array}{c|c} F_{00} & F_{01} \\ \hline F_{10} & F_{11} \end{array} \right] 
  + e^{- i \theta_r \tau}  \left[ \begin{array}{r|r} F_{00} & -F_{01} \\ \hline -F_{10} & F_{11} \end{array} \right] =
  \left[ \begin{array}{r|r} ( e^{i\theta_r \tau } +  e^{- i \theta_r \tau} ) F_{00} &  ( e^{i\theta_r \tau } -  e^{- i \theta_r \tau} ) F_{01} \\ \hline  
  ( e^{i\theta_r \tau } -  e^{- i \theta_r \tau} )F_{10} &  ( e^{i\theta_r \tau } +  e^{- i \theta_r \tau} ) F_{11} \end{array} \right] .
$$
Let us assume first that $A$ is invertible so  that $A = \sum_{\theta_r > 0} \left( \theta_r E_r  + \theta_{r'} E_{r'} \right)$. Let us first consider case \textit{(b)}: at time
time $\tau= \pi \alpha/2$, $e^{i \theta_r \tau} = \pm i$, $e^{-i \theta_r \tau} = \mp i$  and
 the diagonal blocks of $e^{i\theta_r \tau } E_r + e^{i \theta_{r'} \tau} E_{r'}$  vanish.  Similarly, in case \textit{(a)}, the off-diagonal blocks of
 $e^{i\theta_r \tau } E_r + e^{i \theta_{r'} \tau} E_{r'}$  vanish at time $\tau = \pi\alpha/2$.  Summing over the positive eigenvalues $\theta_r$ gives our 
result, except in case \textit{(a)} where $A$ is singular.  To finish the argument we note that  the zero eigenspace of a bipartite graph admits 
a basis of eigenvectors each supported on just one of $V_0$, $V_1$. So the orthogonal projection $E_0$ is a block diagonal matrix and this does not affect 
the block diagonal structure of $U(\tau)$.
\end{proof}

\paragraph{The symmetric double star} In \cite{FanGod}, Fan and Godsil study pretty good state transfer on graphs composed of gluing together two stars. Let
$X$ be the graph (denoted $S_{k,k}$ in \cite{FanGod}) 
on vertex set $V(X)=\{1,\ldots,n\}$ where $n=2k+2$ with $E(X) = \{ 1a \mid 2\le a\le k+2\} \cup \{ 2a \mid k+3 \le a \le 2k+2\}$ 
and adjacency matrix
$$A= \left[ \begin{array}{c|c|c|c} 0 & 1  & \bj^\top & 0 \\  \hline  & & & \\ [-4mm]
1 & 0 & 0 &  \bj^\top  \\ \hline
\bj & 0 & 0 & 0 \\ \hline
0 & \bj & 0 & 0 \end{array} \right] $$
where $\bj$ is the $(m-1)\times 1$ matrix of all ones.  The eigenvalues are given, for example, in \cite{FanGod};
$$ \theta_0 = \frac12 \left( 1 + \sqrt{4k+1} \right), \ 
\theta_1 = \frac12 \left( -1 + \sqrt{4k+1}\right),  \   \theta_2 = 0,  \  
\theta_3 = \frac12 \left( 1 - \sqrt{4k+1}\right),  \  
\theta_4 = \frac12 \left( -1 - \sqrt{4k+1}\right) $$
with each nonzero eigenvalue having multiplicity one.

\begin{proposition}
\label{Prop:doublestar}
At time $\tau= 2\pi/ \sqrt{4k+1}$,  the symmetric double star $X$ admits $(S,S)$-GST for $S=\{1,2\}$.
\end{proposition}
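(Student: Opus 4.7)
The plan is to reduce the problem, via the equitable partition of $V(X)$ into the cells $\{\{1\}, \{2\}, \{3,\ldots,k+2\}, \{k+3,\ldots,2k+2\}\}$, to a $4\times 4$ matrix computation, and then to exploit the order-2 symmetry swapping the two halves of the double star to split that computation into two $2\times 2$ blocks. First, form the $n\times 4$ matrix $P$ whose columns $\pi_1,\ldots,\pi_4$ are the normalized indicators of the cells of this partition, so that $P^\top P = I_4$ and $AP = P\tilde{B}$ for the $4\times 4$ real symmetric matrix
\[ \tilde{B} = \begin{pmatrix} 0 & 1 & \sqrt{k} & 0 \\ 1 & 0 & 0 & \sqrt{k} \\ \sqrt{k} & 0 & 0 & 0 \\ 0 & \sqrt{k} & 0 & 0 \end{pmatrix}. \]
The characteristic polynomial of $\tilde{B}$ is $\theta^4 - (2k+1)\theta^2 + k^2 = (\theta^2 - \theta - k)(\theta^2 + \theta - k)$, whose four roots are exactly the nonzero eigenvalues $\theta_0, \theta_3, \theta_1, \theta_4$ of $A$. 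Hence $A$ vanishes on $\colsp(P)^\perp$, and so $U(\tau) = P\exp(i\tau\tilde{B})P^\top + (I - PP^\top)$. Since $e_1 = \pi_1$ and $e_2 = \pi_2$, establishing $(S,S)$-GST at time $\tau$ reduces to showing that the first two columns of $\exp(i\tau\tilde{B})$ are supported in rows $1$ and $2$.

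Next, observe that the involution $\rho = (1\,2)(3\,4)$ on $\{1,2,3,4\}$ commutes with $\tilde{B}$. Decomposing $\cx^4 = W_+ \oplus W_-$ into the $\pm 1$ eigenspaces of $\rho$ renders $\tilde{B}$ block-diagonal with $2\times 2$ blocks
\[ \tilde{B}_+ = \begin{pmatrix} 1 & \sqrt{k} \\ \sqrt{k} & 0 \end{pmatrix}, \qquad \tilde{B}_- = \begin{pmatrix} -1 & \sqrt{k} \\ \sqrt{k} & 0 \end{pmatrix}, \]
whose spectra are $\{\theta_0, \theta_3\}$ and $\{\theta_1, \theta_4\}$, respectively.

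The crucial numerical observation is that $\theta_0 - \theta_3 = \theta_1 - \theta_4 = \sqrt{4k+1}$, so at $\tau = 2\pi/\sqrt{4k+1}$ we have $e^{i\tau\theta_0} = e^{i\tau\theta_3}$ and $e^{i\tau\theta_1} = e^{i\tau\theta_4}$, forcing each $2\times 2$ block of $\exp(i\tau\tilde{B})$ to collapse to a scalar multiple of $I_2$. Hence $\exp(i\tau\tilde{B}) = e^{i\tau\theta_0}\,\Pi_+ + e^{i\tau\theta_1}\,\Pi_-$ where $\Pi_\pm$ are the orthogonal projections onto $W_\pm$. A direct look at $\Pi_\pm$ shows both are block-diagonal with respect to the coordinate partition $\{\{1,2\}, \{3,4\}\}$ of $\{1,2,3,4\}$, so $\exp(i\tau\tilde{B})$ is too; its first two columns are then supported in rows $1$ and $2$, as required.

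The main obstacle, modest as it is, is recognizing the correct symmetry so that the quotient matrix splits into two $2\times 2$ blocks with identical eigenvalue gaps; once this is in place, the identity $\theta_0 - \theta_3 = \theta_1 - \theta_4 = \sqrt{4k+1}$ makes the collapse nearly automatic. A minor check not to overlook is that the zero eigenspace of $A$ contributes nothing to $U(\tau)e_j$ for $j\in\{1,2\}$, which is immediate since every null eigenvector of $A$ is supported on leaves and hence vanishes at the two centers.
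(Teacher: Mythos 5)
Your proof is correct, and it takes a genuinely different route from the paper's. Both arguments ultimately rest on the same arithmetic fact --- that $\theta_0-\theta_3=\theta_1-\theta_4=\sqrt{4k+1}$, so at $\tau=2\pi/\sqrt{4k+1}$ the phases $e^{i\tau\theta_0},e^{i\tau\theta_3}$ coincide and likewise $e^{i\tau\theta_1},e^{i\tau\theta_4}$ --- but the mechanisms for extracting the conclusion differ. The paper works directly with the explicit spectral idempotents $E_r$ of $A$, writes out the first two rows of $U(\tau)$, and verifies by a computation with the entries of $E_r$ that the coefficients $\alpha,\beta$ on the leaf blocks vanish. You instead pass to the normalized quotient $\tilde B$ of the equitable partition (legitimate here because the four nonzero eigenvalues of $A$ each have multiplicity one, so $A$ annihilates $\colsp(P)^\perp$ and $e_1=\pi_1$, $e_2=\pi_2$ lie in $\colsp(P)$), and then the coincidence of phases forces $\exp(i\tau\tilde B)$ to collapse to $e^{i\tau\theta_0}\Pi_++e^{i\tau\theta_1}\Pi_-$, whose block-diagonality is visible by inspection. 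What your approach buys is that the cancellation $\alpha=\beta=0$ never has to be computed: it is absorbed into the identity ``spectral projections of a $2\times 2$ block sum to $I_2$,'' and the only matrices you ever write down are $4\times 4$ or $2\times 2$. The paper's computation, on the other hand, exhibits the full first two rows of $U(\tau)$ explicitly, which is more information than the bare GST statement. One small point worth making explicit in your write-up is the justification of ``hence $A$ vanishes on $\colsp(P)^\perp$'': this uses either the stated multiplicity-one property of the nonzero eigenvalues, or the direct observation that any vector orthogonal to all four cell indicators is killed by $A$; as written the inference is compressed but sound.
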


\begin{proof}
Let $\sigma_0=\sigma_3=+1$  and $\sigma_1=\sigma_4=-1$ and note that $\theta_r^2-k= \sigma_r  \theta_r$ for $r\neq 2$.
The orthogonal projection onto the eigenspace of $A$ belonging to $\theta = \theta_r$ ($r\neq 2$) is
$$ E_r = \frac{1}{ 4k +2 \sigma_r \theta_r} \left[
\begin{array}{r|r|r|r}
  \theta_r^2  \phantom{\bj} &  \sigma_r  \theta_r^2  \phantom{\bj}& \theta_r \bj^\top  & \sigma_r \theta_r \bj^\top  \\ \hline  & & & \\[-3mm]
  \sigma_r \theta_r^2 \phantom{\bj} &   \theta_r^2 \phantom{\bj} & \sigma_r  \theta_r \bj^\top  & \theta_r \bj^\top  \\  \hline & & & \\[-3mm]
  \theta_r \bj  & \sigma_r  \theta_r \bj  &  J^{\phantom{\top}} & \sigma_r  J^{\phantom{\top}}  \\  \hline & & & \\[-3mm]
 \sigma_r     \theta_r \bj  & \theta_r \bj  &  \sigma_r J^{\phantom{\top}} &  J^{\phantom{\top}} \end{array} \right] $$
where $J$ is the $(m-1)\times (m-1)$ matrix of all ones.  The null space of $A$ is orthogonal to $e_1$ and $e_2$ so $E_2$ plays no role here. 
Since
$$\theta_3\tau = \theta_0\tau - 2\pi, \quad \theta_4\tau = \theta_1\tau - 2\pi,  \qquad   \theta_4=-\theta_0, \quad \theta_3=-\theta_1,$$
we have 
$$ e^{ i \theta_3 \tau} = e^{ i \theta_0 \tau }, \quad e^{ i \theta_4 \tau} =  e^{ i \theta_1 \tau } = \overline{ e^{i\theta_0 \tau} } ~ .$$
This gives us
$$ U(\tau) = e^{i \theta_0 \tau} (E_0 + E_3) + e^{i \theta_1 \tau} (E_1 + E_4) +  E_2   $$
having its first two rows equal to 
$$ \left[ \begin{array}{r|r|r|r}  U(\tau)_{11} & U(\tau)_{12} & \alpha  \bj^\top & \beta \bj^\top   \\  \hline & & & \\[-3mm]
 U(\tau)_{21} & U(\tau)_{22}  & \beta \bj^\top  & \alpha  \bj^\top \end{array} \right] $$
 where
 $$ \alpha = \sum_{r \neq 2} \frac{ e^{i \theta_r \tau}  \theta_r }{ 4k + 2 \sigma_r \theta_r}, \qquad 
 \beta = \sum_{r \neq 2} \frac{ e^{i \theta_r \tau} \sigma_r \theta_r }{ 4k + 2 \sigma_r \theta_r} ~ .$$
 Writing $K=4k+1$, we have 
 $$  \frac{ \theta_0 }{ 4k + 2  \theta_0} +\frac{ \theta_3 }{ 4k + 2  \theta_3} = \frac12 \left[  \frac{ 1 + \sqrt{K} }{K + \sqrt{K} } + \frac{ 1 - \sqrt{K} }{K - \sqrt{K} } \right] = 0$$
 and
 $$  \frac{ \theta_1 }{ 4k - 2  \theta_1} +\frac{ \theta_4 }{ 4k - 2  \theta_4} = \frac12 \left[  \frac{ -1 + \sqrt{K} }{K - \sqrt{K} } + \frac{ -1 - \sqrt{K} }{K + \sqrt{K} } \right] = 0$$ 
 giving us $\alpha=\beta=0$ as desired.
\end{proof}

Finally, we remark, without proof, that we also have $(S,S)$-GST for $S=\{3,6\}$ in  The McKay graph:
\begin{center}
\begin{tikzpicture}[scale=1]
\draw [fill] (0,-1) circle [radius=0.025]; \draw [fill] (0,1) circle [radius=0.025];
\draw [fill] (1,0) circle [radius=0.025]; 
\draw [fill] (2,0) circle [radius=0.025]; 
\draw [fill] (3,0) circle [radius=0.025]; 
\draw [fill] (4,0) circle [radius=0.025]; 
\draw [fill] (5,-1) circle [radius=0.025]; \draw [fill] (5,1) circle [radius=0.025];
\draw [thick] (1,0) -- (0,1) -- (0,-1) -- (1,0) -- (4,0) -- (5,1) -- (5,-1) -- (4,0);

\node [below] at (0,-1) {$1$}; \node [above] at (0,1) {$2$}; 
\node [above] at (1,0) {$3$};
\node [above] at (2,0) {$4$};
\node [above] at (3,0) {$5$}; 
\node [above] at (4,0) {$6$}; 
\node [below] at (5,-1) {$7$}; \node [above] at (5,1) {$8$}; 
\end{tikzpicture}  
\end{center}

\section{Some problems}

We now list some questions that we consider worthy of study.

\begin{enumerate}
\item Which graph products respect group state transfer? 
\item If $X$ is a path, can we classify all $S,T\subseteq V(X)$ for which bijective $(S,T)$-GST occurs?
\item Must every  $t$-closed subset $S$ of $V(X)$ arise from bijective group state transfer?
\item Which graph homomorphism are continuous with respect to the topologies of $t$-open sets?
\item Does case \textit{(d)} of Proposition \ref{Prop:srg} ever occur?
\item Suppose $X$ is a double star with $S$ and $T$ the natural partition of the vertices of degree one (elements of $S$ (resp., $T$) are pairwise at distance
two in $X$ and each $a\in S$ is at distance three from each $b\in T$. In what cases does $X$ admit $(S,T)$-GST?
\item Assume $X$ admits $(S,T)$-GST at time $\tau$ with $S\cap T=\emptyset$. When is there a weighted quotient graph $\bar{X}$  admitting PST from the (sole)
image of  $S$ to the sole image of $T$? (This is likely to be a rare occurrence.)
\item Is it true that, for almost all graphs $X$, the poset $\ST(X,t)$ is trivial for all $t\neq 0$?
\item Suppose $X$ admits $(S,T)$-GST at time $\tau$ and let $\delta$ denote the minimum distance from $a$ to $b$ over all $a\in S$, $b\in T$.  Must $|V(X)|$ grow exponentially with $\delta$?
\end{enumerate}

\section*{Acknowledgements}

We thank Ada Chan for useful discussions. Aleksandar Juri\u{s}i\'{c} helped with the proof of Proposition \ref{Prop:srg} and Chris Larsen
helped with the proof of Lemma \ref{Lem:usuallytrivial}.
The work of WJM was supported through a grant 
from the National Science Foundation (DMS Award \#1808376) which is gratefully acknowledged. 

\bibliography{wjmbibfile}

\end{document}